\documentclass{article}

\usepackage[utf8]{inputenc}
\usepackage[T1]{fontenc}
\usepackage{times}
\usepackage{helvet}
\usepackage{courier}

\usepackage{amsmath,amssymb,amsfonts,amsthm,mathtools}

\usepackage{booktabs}
\usepackage{array}
\usepackage{graphicx}
\usepackage{caption}
\usepackage{float}
\usepackage{placeins}

\usepackage{microtype}
\usepackage{setspace}
\usepackage{arxiv}

\usepackage[numbers,sort&compress]{natbib}
\usepackage{url}
\usepackage[
  colorlinks=true,
  linkcolor=blue,
  citecolor=blue,
  urlcolor=blue
]{hyperref}
\usepackage[noabbrev]{cleveref}

\graphicspath{{figures/}}
\numberwithin{equation}{section}

\newtheorem{theorem}{Theorem}[section]
\newtheorem{lemma}[theorem]{Lemma}
\newtheorem{corollary}[theorem]{Corollary}
\newtheorem{proposition}[theorem]{Proposition}
\theoremstyle{definition}
\newtheorem{definition}[theorem]{Definition}
\theoremstyle{remark}
\newtheorem{remark}[theorem]{Remark}

\newcommand{\E}{\mathbb{E}}
\newcommand{\Prb}{\mathbb{P}}

\newcommand{\Q}{Q}
\newcommand{\PSAPST}{P-SAPST}

\title{Degeneracy-Guided List Compression for Greedy Graph Coloring}

\author{
    Rong Fu \\
    Independent Researcher \\
    Corresponding author \and
    Yongtai Liu \\
    Independent Researcher \and
    Xiaowen Ma \\
    Independent Researcher \and
    Wangyu Wu \\
    Independent Researcher \and
    Long Zhang \\
    Independent Researcher \and
    Hongbo Zhang \\
    Independent Researcher \and
    Yangchen Zeng \\
    Independent Researcher \and
    Hoi Leong Lee \\
    Independent Researcher \and
    Hao Zhang \\
    Independent Researcher
}

\date{}
\renewcommand{\headeright}{}
\renewcommand{\undertitle}{}
\renewcommand{\shorttitle}{Degeneracy-Guided List Compression}

\hypersetup{
  pdftitle={Degeneracy-Guided List Compression for Greedy Graph Coloring},
  pdfsubject={cs.DS; math.CO; cs.DM},
  pdfauthor={Rong Fu, Yongtai Liu, Xiaowen Ma, Wangyu Wu, Long Zhang, Hongbo Zhang, Yangchen Zeng, Hoi Leong Lee, Hao Zhang},
  pdfkeywords={graph coloring; palette sparsification; degeneracy; conflict graph; sublinear algorithms; experimental algorithms},
}

\begin{document}
\maketitle

\begin{abstract}
We study degeneracy guided list compression for greedy graph coloring when graph structure is available before colors are sampled. Our exposure calibrated ordering framework assigns each vertex an independent uniform list according to its backward neighborhood in a color independent order. Its certified instantiation, Profiled Structure Aware Asymmetric Palette Sparsification, or \PSAPST, reverses a minimum degree removal sequence and obtains every backward exposure from the removal profile. For each fixed profile, we characterize the exact local budget required by independent uniform lists under history robust greedy recovery. The profile yields linear list volume on high degree forests and on a core fringe family where reciprocal rank allocation requires \(\Theta(n\log^2 n)\) sampled colors. Exact conflict expectation, concentration, and a dense exposure barrier complete the theoretical description.

The evaluation contains 40,320 runs over SAPBench and two SNAP networks. At the theorem scale, \PSAPST\ reduces mean list size by 47.6 percent relative to calibrated APST while attaining 99.8 percent observed greedy success. P-SAPST Lite replaces peeling with a degree order and provides a lower latency order choice within the same framework. On stress graphs with 250,000 vertices and up to 1,251,868 edges, Lite obtains a payload ratio of 0.865, while calibrated APST obtains 7.886. On email Enron, the corresponding ratios are 0.193 and 5.814. Compression is strongest on hub dominated and power law graphs and disappears near the dense exposure barrier. The method complements edge oblivious streaming APST by addressing an offline regime in which structural plans can be reused.
\end{abstract}

\keywords{graph coloring \and palette sparsification \and degeneracy \and conflict graph \and sublinear algorithms \and experimental algorithms}

\section{Introduction}
\label{sec:introduction}

Greedy palette compression has two design coordinates: the order in which vertices are recovered and the list budget assigned to each position. Reciprocal rank allocation fixes both coordinates without inspecting the edge set, although the difficulty of a greedy step is determined by the neighbors that precede it. This topology blind choice is essential under edge oblivious access. When adjacency is already available, the same rule can allocate large lists to vertices whose recovery histories exclude only a few colors.

Palette sparsification asks how many colors from a common palette must be retained at each vertex while preserving a proper coloring. For maximum degree \(\Delta\), ordinary greedy coloring uses \(\Delta+1\) colors. The palette sparsification theorem shows that logarithmic independent lists still contain a proper coloring with high probability and supports algorithms in restricted access models \citep{assadi2019sublinear}. Later work sharpened the threshold, treated variable source palettes, and connected sparsification with broader probabilistic principles \citep{kahn2023asymptotics,kahn2024asymptotics,ashvinkumar2024palette}.

Asymmetric palette sparsification makes recovery especially simple by assigning reciprocal rank budgets and applying greedy coloring \citep{assadi2026simple}. Its topology blind design is valuable in streaming and sublinear settings. The same design leaves an offline question unresolved. If adjacency information is visible before lists are sampled, can a graph dependent order reduce both list storage and the conflict graph while retaining a high probability greedy guarantee?

We formalize this question as offline greedy palette compression. A planner may inspect a static graph and publish an order with vertex budgets, while a recovery worker receives only fresh sampled lists and the resulting conflict graph. This separation is natural when one topology serves repeated palette constrained queries, when independent list randomness must be renewed, or when the sampled certificate rather than a fixed coloring is the downstream object. A payload ratio below one means that the certificate is smaller than the input edge list under the same integer coding model. The formulation does not compete with storing one already known coloring.

The central primitive is exposure calibrated ordering. Any color independent order induces a backward exposure at every vertex, and the calibration converts that exposure into a sufficient list length. Reverse minimum degree peeling gives a certified instantiation because every recorded removal degree equals the corresponding backward exposure. \PSAPST\ uses this identity to assign short lists where the forbidden palette fraction is small, while dense regions retain larger budgets. P-SAPST Lite and the optional selector use cheaper or empirically selected orders within the same framework. Figure~\ref{fig:mechanism} contrasts the certified profile with reciprocal rank allocation.

\begin{figure}[t]
\centering
\includegraphics[width=\textwidth]{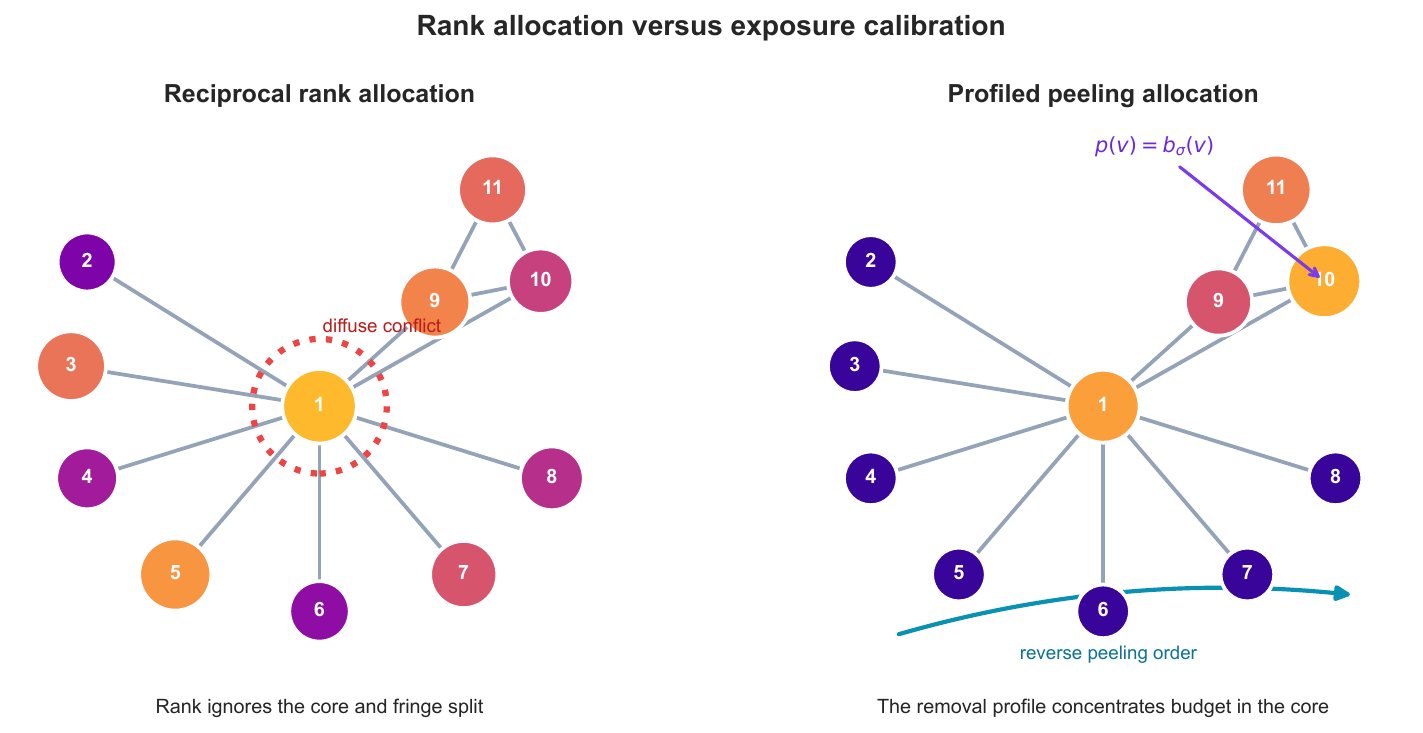}
\caption{Reciprocal rank allocation and exposure calibrated allocation on the same heterogeneous graph. The local potential is \(\phi_\sigma(v)=\log(\Q/b_\sigma(v))\). A large value means that only a small palette fraction can be forbidden, so the assigned list is short. Reverse peeling concentrates palette volume in the constrained core instead of spending it on the sparse fringe.}
\label{fig:mechanism}
\end{figure}

Our contributions are as follows. First, we define an exposure calibrated ordering framework for offline greedy palette compression. A product probability space separates planning randomness from list samples, so graph dependent orders cannot leak sampled colors into the recovery history. For a fixed exposure \(b\), we characterize the exact minimum uniform list length required by history robust greedy recovery. Summing these local prices gives an exact lower bound for that fixed profile within the independent uniform list model.

Second, we give a certified degeneracy guided instantiation. Reverse minimum degree peeling recovers the complete backward exposure profile pointwise. On forests with polynomial maximum degree, \PSAPST\ has linear total list volume and expected conflict size \(O(n/\Delta)\), while theorem scale reciprocal rank allocation uses \(\Theta(n\log^2 n)\) sampled colors. A core fringe theorem extends the separation to graphs containing a genuine dense core. We also prove exact conflict expectation, a bounded difference concentration inequality, and a dense exposure barrier that explains why regular and clique like regions resist compression.

Finally, we develop certified and low latency implementations and evaluate them on SAPBench and two SNAP networks. Full \PSAPST\ carries the pointwise profile theorem. P-SAPST Lite applies the same calibration to a degree order and inherits a conditional approximation whenever its exposure profile is dominated by the peeling profile. The study contains 40,320 runs, fixed profile lower bound audits, confidence intervals, list tails, payload ratios, phase timings, repeated query crossover measurements, and stress graphs with 250,000 vertices and more than 1.25 million edges.

\textbf{Problem setting and scope.} We target the offline regime in which adjacency is available before sampling. This regime arises when a static graph serves repeated palette constrained requests with fresh randomness or when a compressed sampled certificate is stored or communicated. The model is complementary to edge oblivious APST. Edge oblivious allocation remains preferable for a single pass over an unknown graph, while exposure calibration is useful when structural metadata can be reused. A central open problem is whether a semi streaming sketch can approximate a sufficiently accurate exposure profile without materializing the graph. Related local constraint problems, including list coloring and defective coloring, require separate recovery analyses.

\section{Related Work}
\label{sec:related-work}

\subsection{Palette sparsification}
Assadi, Chen, and Khanna proved that logarithmic random lists preserve a
proper \((\Delta+1)\) coloring with high probability
\citep{assadi2019sublinear}. Kahn and Kenney established sharp asymptotic
forms and extended the analysis to vertex dependent source palettes
\citep{kahn2023asymptotics,kahn2024asymptotics}. Ashvinkumar and Kenney
related palette sparsification to spread distributions
\citep{ashvinkumar2024palette}. Random list colorability has also been
examined from a general combinatorial viewpoint
\citep{hefetz2025coloring}.

\subsection{Structure restricted sparsification}
Additional graph structure can improve either the target palette or the
sampling requirement. Alon and Assadi studied palette sparsification beyond
the basic degree plus one setting, including larger palettes and
triangle free coloring with fewer than \(\Delta+1\) colors
\citep{alon2020palette}. Dhawan extended this direction to graphs whose
vertex neighborhoods induce few edges \citep{dhawan2024palette}. These
results exploit structural classes to improve the target number of colors
or the symmetric sampling threshold.

\subsection{Asymmetric and distributed recovery}
Assadi and Yazdanyar introduced asymmetric palette sparsification and showed
that reciprocal rank lists admit a direct greedy recovery algorithm
\citep{assadi2026simple}. Distributed palette sparsification instead asks
how the coloring can be recovered under locality and communication
constraints \citep{flin2024distributed}. Related degree plus one coloring
algorithms have been developed for congested, massively parallel, and local
models \citep{chang2019complexity,halldorsson2022near,maus2023distributed}.
\PSAPST\ does not improve their round or query bounds because it assumes
adjacency access before list sampling.

\subsection{Streaming and structural orders}
The semi streaming model permits near linear memory while edges arrive
sequentially \citep{feigenbaum2005graph}. Coloring in this model includes
degeneracy based, deterministic, and adversarially robust algorithms
\citep{bera2019graph,assadi2022deterministic,assadi2023coloring}. The
smallest last ordering of Matula and Beck supplies the classical peeling
foundation for degeneracy aware coloring \citep{matula1983smallest}.
We use the same peeling process for a different purpose: its entire removal
profile determines vertexwise random list budgets in an offline compressed
instance.

\subsection{Class restricted and profile calibrated objectives}
The structural information used by Alon and Assadi and by Dhawan changes
the colorability regime for a graph class \citep{alon2020palette,dhawan2024palette}.
Our objective is orthogonal. We retain the common palette
\(\{1,\ldots,\Delta+1\}\) and reduce the vertexwise list budgets and the
induced conflict graph. Their assumptions are expressed through properties
such as triangle freedom or neighborhood sparsity. Our parameter is the
complete exposure vector of a chosen greedy order. This distinction is
most visible on heterogeneous graphs. A high degree forest has degeneracy
one and can contain a hub of degree \(\Theta(n)\); the peeling profile then
gives linear total list volume even though the maximum degree is large.
The comparison does not imply dominance in either direction. Class
restricted sparsification can reduce the target color count, while profile
calibration compresses a fixed \((\Delta+1)\) palette certificate.

\section{Formal Model}
\label{sec:model}

\subsection{Offline greedy palette compression}

Let \(G=(V,E)\) be a simple undirected graph with \(n=|V|\), maximum degree
\(\Delta\), and common palette
\begin{equation}
[\Q]=\{1,\ldots,\Q\},
\qquad
\Q=\Delta+1.
\label{eq:palette}
\end{equation}
where \([\Q]\) denotes the colors available to every vertex.

A planning rule receives \(G\) and returns an order
\(\sigma=(v_1,\ldots,v_n)\) together with integer budgets
\(\ell(v)\in[\Q]\). After this plan is fixed, each vertex independently
samples a uniformly random \(\ell(v)\)-subset \(L(v)\) of \([\Q]\).
Greedy recovery processes \(\sigma\) from left to right and assigns an
arbitrary color in \(L(v_i)\) that has not been used by an earlier neighbor.

For the sampled lists, define the conflict graph
\begin{equation}
G_L=(V,E_L),
\qquad
E_L=
\bigl\{
\{u,v\}\in E:L(u)\cap L(v)\neq\varnothing
\bigr\}.
\label{eq:conflict-graph}
\end{equation}
where \(E_L\) contains exactly the input edges that remain relevant after
list sampling.

\begin{definition}[Offline greedy palette compression]
\label{def:offline-compression}
For a target failure probability \(\delta\), the planning problem is to
choose \((\sigma,\ell)\) before list sampling so that greedy recovery fails
with probability at most \(\delta\). Among feasible plans, the compression
criteria are
\begin{align}
S(\ell)
&=
\sum_{v\in V}\ell(v),
\label{eq:objective-storage}\\
C(\ell)
&=
\E\lvert E_L\rvert.
\label{eq:objective-conflict}
\end{align}
where \(S(\ell)\) is total list volume and \(C(\ell)\) is the expected
number of retained conflict edges.
\end{definition}

The definition compresses a palette constrained recovery certificate. It
does not claim to encode one already known coloring with minimum space.

If recovery fails, the sampled certificate can be rejected and the lists
can be drawn again from the same fixed plan. Since one trial succeeds with
probability at least \(1-\delta\), independent resampling requires at most
\begin{equation}
\E A
\leq
\frac{1}{1-\delta}
\label{eq:expected-resampling}
\end{equation}
where \(A\) is the number of independently sampled certificates examined
before success. This fallback preserves the palette and the planning guarantee,
although it changes a single sample request into a Las Vegas service.

\subsection{Repeated palette queries}

Suppose a fixed graph supports \(r\) independent list sampling and recovery tasks. For a method \(M\), let \(P_M\) denote the cost of constructing its structural plan and let \(q_M\) denote the expected cost of one subsequent budget, sampling, conflict construction, and recovery phase. The total service cost is
\begin{equation}
T_M(r)=P_M+r q_M.
\label{eq:repeated-query-cost}
\end{equation}
where \(P_M\) is paid once and each query draws fresh independent lists. This model does not assume that a sampled coloring is reused. It isolates the setting in which graph metadata remains fixed while palette constrained requests change.

\subsection{Randomness and exposure history}

The probability space is a product
\begin{equation}
\Omega=\Omega_{\mathrm{aux}}\times\Omega_{\mathrm{list}}.
\label{eq:product-space}
\end{equation}
where \(\Omega_{\mathrm{aux}}\) contains random tie breaking used by the
planner and \(\Omega_{\mathrm{list}}\) contains the independent vertex
lists. The order and budgets are measurable with respect to
\((G,\Omega_{\mathrm{aux}})\) only.

For a fixed realization of the plan, let \(\mathcal{H}_i\) contain the lists
and greedy choices revealed before \(v_i\). Then
\begin{equation}
L(v_i)\ \perp\!\!\!\perp\ \mathcal{H}_i
\mid
(G,\Omega_{\mathrm{aux}}).
\label{eq:history-independence}
\end{equation}
where the conditional independence states that the current list is not
contaminated by the earlier recovery history. The planner may inspect all
graph structure, but it may not inspect sampled colors.

For a vertex \(v_i\), define
\begin{align}
B_\sigma(v_i)
&=
N(v_i)\cap\{v_1,\ldots,v_{i-1}\},
\label{eq:backward-neighborhood}\\
b_\sigma(v_i)
&=
\lvert B_\sigma(v_i)\rvert.
\label{eq:backward-degree}
\end{align}
where \(B_\sigma(v_i)\) is the earlier neighborhood and
\(b_\sigma(v_i)\) is the backward exposure. The number of distinct
forbidden colors at \(v_i\) is at most \(b_\sigma(v_i)\), regardless of how
large the earlier neighbors' lists were.

\section{Exposure-Calibrated List Compression}
\label{sec:methodology}

\subsection{Local exposure calibration}

The local exposure potential of a positive backward degree is
\begin{equation}
\phi_\sigma(v)
=
\log\!\left(
\frac{\Q}{b_\sigma(v)}
\right).
\label{eq:local-exposure-potential}
\end{equation}
where \(\phi_\sigma(v)\) measures the logarithmic separation between the
palette and the largest forbidden set at \(v\).

For a local error allowance \(\eta\in(0,1)\), define
\begin{equation}
\lambda_\eta(b)
=
\begin{cases}
1, & b=0,\\[2pt]
\displaystyle
\min\!\left\{
\Q,
\left\lceil
\frac{\log(1/\eta)}
{\log(\Q/b)}
\right\rceil
\right\},
& 1\leq b\leq \Q-1.
\end{cases}
\label{eq:local-calibration}
\end{equation}
where \(b\) is a backward exposure and \(\lambda_\eta(b)\) is the assigned
uniform list size.

\begin{lemma}[Color independent exposure calibration]
\label{lem:local-calibration}
Fix a plan satisfying \cref{eq:history-independence}. If vertex \(v\)
receives a uniform list of size
\begin{equation}
\ell(v)=\lambda_{\eta_v}\!\left(b_\sigma(v)\right),
\label{eq:vertex-calibration}
\end{equation}
where the local allowances satisfy
\(\sum_{v\in V}\eta_v\leq\delta\), then greedy recovery succeeds with
probability at least \(1-\delta\).
\end{lemma}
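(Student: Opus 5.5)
Greedy recovery can fail only at a step $i$ where every color of $L(v_i)$ is already used by an earlier neighbor. So I would define the local failure events
\[
F_i=\bigl\{L(v_i)\subseteq \chi(B_\sigma(v_i))\bigr\},
\]
where $\chi(B_\sigma(v_i))$ is the set of colors already assigned to earlier neighbors. If no $F_i$ occurs, recovery succeeds. The plan is to bound $\Prb(F_i)\le \eta_{v_i}$ for each $i$ and then apply a union bound, giving failure probability at most $\sum_v \eta_v\le\delta$.

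For the local bound, fix a realization of $\Omega_{\mathrm{aux}}$, so that $\sigma$, the budgets, and hence $b=b_\sigma(v_i)$ are deterministic. I would condition on the history $\mathcal{H}_i$. This fixes a forbidden set $F=\chi(B_\sigma(v_i))$ with $|F|\le b$, since each earlier neighbor contributes at most one color. By \cref{eq:history-independence}, $L(v_i)$ is still a uniform $\ell$-subset of $[\Q]$ given $\mathcal{H}_i$. When the history makes some earlier neighbor itself fail, there is no color at that neighbor; I would sidestep this by bounding the first failure index only, so on $F_i$ all earlier steps succeeded and $F$ is well defined.

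The counting step splits into cases.
\begin{itemize}
\item If $b=0$, then $F=\varnothing$ and $\ell=1$, so $\Prb(F_i\mid\mathcal{H}_i)=0$.
\item If $\ell=\Q$, then $L(v_i)=[\Q]$, which cannot lie inside $F$ because $|F|\le \Q-1$, so again the probability is $0$.
\item Otherwise
\[
\Prb(L\subseteq F\mid \mathcal{H}_i)=\binom{|F|}{\ell}\Big/\binom{\Q}{\ell}=\prod_{j=0}^{\ell-1}\frac{|F|-j}{\Q-j}\le (b/\Q)^{\ell}.
\]
Each factor satisfies $(|F|-j)/(\Q-j)\le |F|/\Q$ because $|F|\le \Q$. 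The choice $\ell\ge \log(1/\eta)/\log(\Q/b)$ gives $\ell\log(\Q/b)\ge\log(1/\eta)$, hence $(b/\Q)^\ell\le\eta$.
\end{itemize}
Averaging over $\mathcal{H}_i$ and then over $\Omega_{\mathrm{aux}}$ gives $\Prb(F_i)\le\eta_{v_i}$, with the $\eta_v$ deterministic allowances.

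The main obstacle is the conditioning argument. The forbidden set depends on earlier lists and on arbitrary greedy choices, possibly adversarial ones. The point to make explicit is that the bound $(b/\Q)^\ell$ is uniform over every forbidden set of size at most $b$. The estimate therefore holds conditionally on any history, and this history robustness is exactly what \cref{eq:history-independence} supplies. The remaining steps are routine.
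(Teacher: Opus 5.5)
Your proof is correct and follows the paper's argument essentially step for step. You condition on the plan and on $\mathcal{H}_i$, use \cref{eq:history-independence} to keep $L(v_i)$ uniform against a forbidden set of size at most $b_\sigma(v_i)$, bound the containment probability by $(b_\sigma(v_i)/\Q)^{\ell(v_i)}\le\eta_{v_i}$ (the capped case gives zero failure), and finish with a union bound; your first-failure device makes explicit a point the paper leaves implicit. One cosmetic fix: when $\ell>|F|$ your product has nonpositive factors, so either note that it contains a zero factor, or, as the paper does, first replace $|F|$ by $b$ and treat $\ell>b$ separately.
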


The theorem setting uses \(\eta_v=\delta/n\). The proof conditions on the
entire earlier history. At most \(b_\sigma(v)\) distinct colors are
forbidden, and the current list remains independent of that set.
Setting \(\delta=n^{-c}\), where \(c>0\), gives success probability at least
\(1-n^{-c}\). Thus the recovery statement is already a high probability
guarantee over the joint list sample, rather than an expectation over
successful vertices.

\subsection{Exposure-calibrated ordering framework}

For any color independent order \(\sigma\) and local allowances
\(\{\eta_v\}_{v\in V}\), define
\begin{align}
S_{\boldsymbol{\eta}}(\sigma)
&=
\sum_{v\in V}
\lambda_{\eta_v}\!\left(b_\sigma(v)\right),
\label{eq:order-budget-objective}\\
C_{\boldsymbol{\eta}}(\sigma)
&=
\sum_{\{u,v\}\in E}
\rho_{\Q}\!\left(
\lambda_{\eta_u}(b_\sigma(u)),
\lambda_{\eta_v}(b_\sigma(v))
\right).
\label{eq:order-conflict-objective}
\end{align}
where \(S_{\boldsymbol{\eta}}(\sigma)\) is the sufficient list volume and
\(C_{\boldsymbol{\eta}}(\sigma)\) is the exact expected conflict size.
The first quantity is defined by local exposure calibration, and the
second follows from \cref{eq:intersection-probability}. Reverse peeling,
degree order, core order, and the optional selector differ only in how
they choose \(\sigma\). The role of peeling is certification rather than
universal optimization: it exposes a removal profile with a pointwise
identity, while cheaper orders may attain a smaller empirical objective
on a particular graph.

\subsection{Certified peeling profile}

Let \(r_1,\ldots,r_n\) be a minimum degree removal sequence. Immediately
before \(r_i\) is removed, define
\begin{equation}
p(r_i)
=
\deg_{G[\{r_i,\ldots,r_n\}]}(r_i).
\label{eq:removal-profile}
\end{equation}
where \(p(v)\) is the degree of \(v\) in the residual graph at its removal
time.

\begin{theorem}[Profile identity]
\label{thm:profile-identity}
For the reverse order
\begin{equation}
\sigma_{\mathrm{peel}}=(r_n,r_{n-1},\ldots,r_1),
\label{eq:reverse-peeling-order}
\end{equation}
where the sequence is read from the last removed vertex to the first,
every vertex satisfies
\begin{equation}
b_{\sigma_{\mathrm{peel}}}(v)=p(v).
\label{eq:profile-identity}
\end{equation}
where the backward exposure equals the recorded removal degree pointwise.
\end{theorem}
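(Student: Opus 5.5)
The identity follows by unwinding both definitions and checking that they count the same set of neighbors. Write the reverse order as \(\sigma_{\mathrm{peel}}=(v_1,\ldots,v_n)\) with \(v_j=r_{n+1-j}\). Then a vertex \(v=r_i\) sits at position \(j=n+1-i\), and the vertices preceding it in \(\sigma_{\mathrm{peel}}\) are
\begin{equation*}
\{v_1,\ldots,v_{j-1}\}=\{r_n,r_{n-1},\ldots,r_{i+1}\}=\{r_{i+1},\ldots,r_n\}.
\end{equation*}
By \cref{eq:backward-neighborhood} this gives
\begin{equation*}
B_{\sigma_{\mathrm{peel}}}(r_i)=N(r_i)\cap\{r_{i+1},\ldots,r_n\}.
\end{equation*}

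Next I would expand the removal profile \cref{eq:removal-profile}. Since \(G[\{r_i,\ldots,r_n\}]\) is an induced subgraph of a simple graph, the degree of \(r_i\) there equals \(\lvert N(r_i)\cap\{r_i,\ldots,r_n\}\rvert\). The graph has no loops, so \(r_i\notin N(r_i)\), and therefore this set equals \(N(r_i)\cap\{r_{i+1},\ldots,r_n\}\). Taking cardinalities and comparing with \cref{eq:backward-degree} gives \(b_{\sigma_{\mathrm{peel}}}(r_i)=p(r_i)\) for every \(i\). Because \(r_1,\ldots,r_n\) enumerates \(V\), the identity holds pointwise on all of \(V\).

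The minimum-degree rule is never used in this argument. The identity holds for the reverse of any removal sequence; the minimum-degree choice matters only later, for bounding the profile by the degeneracy. Likewise, random tie breaking in \(\Omega_{\mathrm{aux}}\) only selects which sequence is produced, and the identity holds realization by realization.

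No step here is a genuine obstacle. The only point needing care is the index reversal. The predecessors of \(r_i\) in \(\sigma_{\mathrm{peel}}\) are the vertices removed \emph{after} it, namely exactly those still present in the residual graph when \(r_i\) is removed. The residual graph also contains \(r_i\) itself, so the argument must remove \(r_i\) from its own neighborhood, which the absence of self-loops guarantees.
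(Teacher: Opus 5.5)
Your proof is correct and follows the paper's argument: in both, the predecessors of \(r_i\) in \(\sigma_{\mathrm{peel}}\) are exactly \(r_{i+1},\ldots,r_n\), so the backward neighborhood equals the set of neighbors active at removal, whose size is \(p(r_i)\). Your explicit index bookkeeping, the remark that simplicity excludes \(r_i\) from its own residual neighborhood, and the observation that the minimum degree rule is not needed are accurate refinements of the same proof.
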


\PSAPST\ applies \cref{eq:local-calibration} to this complete profile with
\(\eta_v=\delta/n\). Consequently, its total list volume is
\begin{equation}
S_{\mathrm{P}}
=
\sum_{v\in V}
\lambda_{\delta/n}\!\left(p(v)\right).
\label{eq:profile-total-budget}
\end{equation}
where \(S_{\mathrm{P}}\) retains every local removal degree instead of
replacing the profile by its maximum.

If \(k=\max_v p(v)\) is the degeneracy, then
\begin{equation}
S_{\mathrm{P}}
\leq
n\lambda_{\delta/n}(k).
\label{eq:degeneracy-corollary}
\end{equation}
where the right side is the coarser degeneracy only bound. Since
\(\operatorname{degeneracy}(G)\leq
2\operatorname{arboricity}(G)-1\), the same expression also gives an
arboricity consequence.

\begin{theorem}[Sparse profile separation]
\label{thm:forest-separation}
Let \(G\) be a forest with \(\Delta\geq n^\gamma\), where \(\gamma>0\), and
let \(\delta=n^{-c}\), where \(c>0\). Then
\begin{align}
S_{\mathrm{P}}
&\leq
n\left\lceil\frac{1+c}{\gamma}\right\rceil,
\label{eq:forest-budget}\\
\E\lvert E_L\rvert
&\leq
\frac{n}{\Delta+1}\,
\left\lceil\frac{1+c}{\gamma}\right\rceil^2.
\label{eq:forest-conflicts}
\end{align}
where the first line is linear list volume and the second is
\(O(n/\Delta)\) expected conflict size. For the reciprocal rank rule
\(\ell_R(j)=\min\{\Q,\lceil c_0n\log n/j\rceil\}\), with any fixed
\(c_0>0\), the total list volume is
\begin{equation}
\sum_{j=1}^{n}\ell_R(j)=\Theta(n\log^2 n).
\label{eq:rank-forest-budget}
\end{equation}
where the asymptotic statement holds under the same polynomial maximum
degree assumption.
\end{theorem}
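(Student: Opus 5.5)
The plan has three parts: bound each removal degree of a forest, convert that bound into the list-volume bound, and then handle conflicts and the reciprocal-rank sum.

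\textbf{Removal degrees.} A forest has degeneracy at most one. Every residual graph $G[\{r_i,\ldots,r_n\}]$ is again a forest, so it has a vertex of degree at most one. The minimum degree removal therefore always deletes a vertex of residual degree $0$ or $1$, giving $p(v)\in\{0,1\}$ for every $v$. By \cref{thm:profile-identity}, $b_{\sigma_{\mathrm{peel}}}(v)=p(v)\le 1$.

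\textbf{List volume.} With $\eta=\delta/n=n^{-(1+c)}$, we have $\lambda_\eta(0)=1$ and
\[
\lambda_\eta(1)\le\left\lceil \frac{(1+c)\log n}{\log(\Delta+1)}\right\rceil \le \left\lceil\frac{1+c}{\gamma}\right\rceil .
\]
The second inequality uses $\log(\Delta+1)\ge\gamma\log n$, which follows from $\Delta\ge n^\gamma$. Call the resulting constant $K$; it is at least $1$. Summing over the $n$ vertices gives $S_{\mathrm P}\le nK$, which is \cref{eq:forest-budget}.

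\textbf{Conflicts.} Fix an edge $\{u,v\}$ and condition on $L(u)$. The probability that the independent uniform $\ell(v)$-subset $L(v)$ meets $L(u)$ is at most $\ell(u)\ell(v)/Q$, by a union bound over the elements of $L(u)$. This is weaker than the exact formula \cref{eq:intersection-probability}, but enough here. Since a forest has at most $n-1$ edges and every list has size at most $K$, linearity of expectation gives
\[
\E|E_L|\le \frac{(n-1)K^2}{\Delta+1}\le \frac{nK^2}{\Delta+1},
\]
which is \cref{eq:forest-conflicts}.

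\textbf{Reciprocal rank volume.} Here the assumption $\Delta\ge n^\gamma$ matters for capping. Write $\ell_R(j)=\min\{Q,\lceil c_0 n\log n/j\rceil\}$, and let $j^*=c_0 n\log n/Q$ be the index where the cap stops binding. Since $Q\ge n^\gamma$, we get $j^*\le c_0 n^{1-\gamma}\log n=o(n)$.

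\emph{Upper bound.} Drop the cap and use $\lceil x\rceil\le x+1$:
\[
\sum_j \ell_R(j)\le n+c_0 n\log n\, H_n=O(n\log^2 n).
\]

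\emph{Lower bound.} Restrict to $j>\max(j^*,1)$, where the cap does not bind and $\ell_R(j)\ge c_0n\log n/j$. This gives
\[
\sum_j \ell_R(j)\ge c_0 n\log n\,(H_n-H_{\lceil j^*\rceil})\ge c_0 n\log n\,\bigl(\log(n/j^*)-O(1)\bigr).
\]
Now $\log(n/j^*)\ge \gamma\log n-\log\log n-O(1)=\Omega(\log n)$, so the sum is $\Omega(n\log^2 n)$. The case $Q\ge n$ simply makes $j^*$ smaller, which only helps.

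This last estimate is the main delicate point. One must check that the capped prefix removes only a $(1-\gamma)$ fraction of the harmonic logarithm, and this is exactly where the polynomial degree hypothesis is used. The remaining steps are direct substitutions.
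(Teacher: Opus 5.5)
Your proposal is correct and follows the paper's proof essentially step for step: the forest profile bound $p(v)\le 1$, the substitution $\log \Q\ge\gamma\log n$ into $\lambda_{\delta/n}(1)$, the union bound $\rho_{\Q}(x,y)\le xy/\Q$ summed over fewer than $n$ edges, and a harmonic-tail lower bound starting past the saturation index $c_0 n\log n/\Q$, using $\log(\Q/\log n)=\Omega(\log n)$. The only difference is cosmetic: for the upper bound on the reciprocal rank sum you drop the cap outright, whereas the paper splits off the saturated prefix.
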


The theorem gives a strict structural separation. A star and a hub forest
have large maximum degree but a removal profile bounded by one. A regular
graph can have the same maximum degree and a profile close to the palette
size, so no analogous saving follows.

\begin{theorem}[Core fringe separation]
\label{thm:core-fringe-separation}
Let \(G_{s,t}\) consist of a clique on \(s\geq2\) core vertices and \(t\) pendant vertices adjacent to each core vertex. Let
\begin{equation}
n=s(t+1),
\qquad
t\geq s n^\gamma,
\qquad
\delta=n^{-c},
\label{eq:core-fringe-conditions}
\end{equation}
where \(\gamma,c>0\) are constants. Define
\begin{equation}
C_{\gamma,c}
=
\left\lceil
\frac{1+c}{\gamma}
\right\rceil.
\label{eq:core-fringe-constant}
\end{equation}
where \(C_{\gamma,c}\) is independent of \(n\). Then \PSAPST\ satisfies
\begin{align}
S_{\mathrm{P}}
&\leq
C_{\gamma,c}n,
\label{eq:core-fringe-budget}\\
\E\lvert E_L\rvert
&\leq
C_{\gamma,c}^2s.
\label{eq:core-fringe-conflict}
\end{align}
where the bounds include both the clique core and its sparse fringe. The reciprocal rank rule from \cref{thm:forest-separation} has total list volume
\begin{equation}
\sum_{j=1}^{n}\ell_R(j)=\Theta(n\log^2 n).
\label{eq:core-fringe-rank-budget}
\end{equation}
where the asymptotic separation holds for the same family.
\end{theorem}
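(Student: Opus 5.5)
The plan is to compute the peeling profile of \(G_{s,t}\) explicitly, substitute it into the calibration \cref{eq:local-calibration} with \(\eta_v=\delta/n=n^{-(1+c)}\), and then bound the conflict expectation edge by edge. The rank bound reduces to the computation behind \cref{eq:rank-forest-budget}, once \(\Q\) is shown to be polynomial in \(n\).

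\textbf{Step 1 (profile).} I would show that every minimum degree removal sequence satisfies \(p(v)\le 1\) for each pendant vertex and \(p(v)\le s-1\) for each core vertex.
\begin{itemize}
\item A pendant vertex has degree at most \(1\) at all times.
\item While any pendant vertex remains, the minimum residual degree is at most \(1\). So a core vertex can only be removed at residual degree \(\le 1\le s-1\). This can only happen when \(s=2\) and that core vertex has lost all its leaves.
\item Once all pendants are gone, the residual graph is a subclique. Hence every core vertex is removed at residual degree at most \(s-1\).
\end{itemize}
By \cref{thm:profile-identity}, these are also the backward exposures under \(\sigma_{\mathrm{peel}}\).

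\textbf{Step 2 (budget).} Here \(\Delta=t+s-1\), so \(\Q=t+s\). For every vertex with \(1\le p(v)\le s-1\) we have \(\Q/p(v)\ge t/s\ge n^\gamma\). Hence
\[
\lambda_{\delta/n}(p(v))\le\left\lceil\frac{(1+c)\log n}{\gamma\log n}\right\rceil=C_{\gamma,c}.
\]
The case \(p(v)=0\) gives \(1\le C_{\gamma,c}\). Summing over all \(n\) vertices yields \cref{eq:core-fringe-budget}.

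\textbf{Step 3 (conflicts).} For uniform lists of sizes \(a\) and \(b\), a union bound over the colors of one list gives \(\rho_{\Q}(a,b)\le ab/\Q\). The graph has \(st\) pendant edges and \(\binom{s}{2}\) clique edges. By Step 2 each edge contributes at most \(C_{\gamma,c}^2/\Q\). Therefore
\[
\E|E_L|\le C_{\gamma,c}^2\,\frac{st+s(s-1)/2}{t+s}\le C_{\gamma,c}^2 s,
\]
since \(st+s(s-1)/2\le s(t+s)\).

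\textbf{Step 4 (rank rule).} This is the only mildly delicate point: I must check that \(\Q\) is polynomially large, so that the cap \(\min\{\Q,\cdot\}\) does not truncate the harmonic sum.
\begin{itemize}
\item From \(n=s(t+1)\) and \(s\le t n^{-\gamma}\) we get \(t(t+1)\ge n^{1+\gamma}\). Hence \(\Q\ge t\ge n^{1/2}\) for large \(n\), and trivially \(\Q\le n\).
\item Set \(j_0=c_0 n\log n/\Q\). The indices \(j\le j_0\) contribute at most \(\Q j_0=c_0 n\log n\).
\item The indices \(j_0<j\le n\) contribute
\[
c_0 n\log n\cdot\bigl(\log(n/j_0)+O(1)\bigr)+O(n)=c_0 n\log n\cdot\bigl(\log(\Q/(c_0\log n))+O(1)\bigr)+O(n).
\]
This is \(\Theta(n\log^2 n)\) because \(\tfrac12\log n-O(\log\log n)\le\log\Q\le\log n\).
\end{itemize}
This is the same computation as in \cref{thm:forest-separation}, now with the polynomial lower bound on \(\Q\) supplied by the core fringe conditions. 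It gives \cref{eq:core-fringe-rank-budget}.
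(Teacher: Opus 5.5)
Your proposal is correct and follows the paper's proof step for step. The paper likewise bounds the peeling profile by $1$ on pendants and $s-1$ on the core, gets $\Q/p(v)\ge t/s\ge n^\gamma$ for the per-vertex budget, uses $\rho_\Q(x,y)\le xy/\Q$ with $m\le s\Q$, and reuses the harmonic-tail argument once $\Q$ is polynomial in $n$.

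Your Step~1 is a bit more careful than the paper's. The paper claims every pendant is removed first at degree exactly one, which can fail for $s=2$ under tie breaking. Your aside that an early-removed core vertex must have lost all its leaves is also not quite right for the second core vertex when $s=2$. This does no harm, since only $p\le 1$ and $p\le s-1$ are used.
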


The core fringe result is not a forest artifact. Peeling first removes every pendant vertex at exposure one and then processes the clique at exposure at most \(s-1\). The condition \(t/s\geq n^\gamma\) leaves a polynomial gap between both exposure scales and the palette, while reciprocal rank allocation cannot use that gap.

\subsection{Local necessity in the independent uniform model}

\begin{definition}[History robust local rule]
\label{def:history-robust}
A uniform list of size \(\ell\) is history robust for parameters
\((b,\eta)\) if, for every forbidden set \(F\subseteq[\Q]\) with
\(|F|=b\) that is independent of the current list, the probability that
the list is contained in \(F\) is at most \(\eta\).
\end{definition}

\begin{definition}[Exact local optimum]
\label{def:exact-local-optimum}
For \(b\in\{0,\ldots,\Q-1\}\), define
\begin{equation}
h_\eta(b)
=
\min\!\left\{
\ell\in[\Q]:
\frac{\binom{b}{\ell}}{\binom{\Q}{\ell}}
\leq
\eta
\right\},
\label{eq:exact-local-optimum}
\end{equation}
where \(\binom{b}{\ell}=0\) when \(\ell>b\), and \(h_\eta(0)=1\).
\end{definition}

\begin{theorem}[Exact history robust optimum]
\label{thm:exact-local-optimum}
A uniform list of size \(\ell\) is history robust for \((b,\eta)\) if and
only if
\begin{equation}
\ell\geq h_\eta(b).
\label{eq:exact-optimum-condition}
\end{equation}
where the comparison is exact within the independent uniform list model.
Consequently, for a fixed order and local allowances \(\eta_v\), every
history robust vertexwise plan has total budget at least
\begin{equation}
H(\sigma)
=
\sum_{v\in V}
h_{\eta_v}\!\left(b_\sigma(v)\right).
\label{eq:profile-optimum}
\end{equation}
where \(H(\sigma)\) is the exact profile lower bound.
\end{theorem}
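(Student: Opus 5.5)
The argument reduces history robustness to one exact formula and then uses monotonicity in \(\ell\). Fix \(b\in\{0,\ldots,\Q-1\}\) and a forbidden set \(F\subseteq[\Q]\) with \(|F|=b\) that is independent of the current list \(L\). Conditioning on \(F\), independence means \(L\) is still a uniformly random \(\ell\)-subset of \([\Q]\). The event \(L\subseteq F\) occurs exactly when \(L\) is one of the \(\binom{b}{\ell}\) \(\ell\)-subsets of \(F\). Hence
\begin{equation*}
\Prb\bigl(L\subseteq F \mid F\bigr)=\frac{\binom{b}{\ell}}{\binom{\Q}{\ell}},
\end{equation*}
with the convention \(\binom{b}{\ell}=0\) for \(\ell>b\). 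Averaging over \(F\) does not change this value, since it depends only on \(|F|=b\). So history robustness for \((b,\eta)\) is equivalent to the single inequality \(\binom{b}{\ell}/\binom{\Q}{\ell}\le\eta\). The supremum over all admissible \(F\) is attained by every deterministic \(F\) of size \(b\), so this condition is both necessary and sufficient.

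The second step is to show that \(g(\ell)=\binom{b}{\ell}/\binom{\Q}{\ell}\) is nonincreasing in \(\ell\). I would use the ratio
\begin{equation*}
\frac{g(\ell+1)}{g(\ell)}=\frac{b-\ell}{\Q-\ell}\in[0,1)
\end{equation*}
for \(\ell<b\). For \(\ell\ge b\) we have \(g(\ell+1)=0\), since \(\ell+1>b\). Thus the set of feasible \(\ell\in[\Q]\) is an up-set. It is nonempty because \(\ell=\Q>b\) gives \(g=0\le\eta\). Its minimum is \(h_\eta(b)\) by \cref{eq:exact-local-optimum}, so \(\ell\) is feasible if and only if \(\ell\ge h_\eta(b)\).

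The case \(b=0\) needs the convention \(h_\eta(0)=1\). Here \(F=\varnothing\) and \(g(\ell)=0\) for every \(\ell\ge1\), so every list size in \([\Q]\) is robust, and the minimum is \(1\). This agrees with the convention, which I would state explicitly because the literal formula at \(\ell=0\) is excluded by the requirement \(\ell\in[\Q]\).

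For the profile bound, fix \(\sigma\) and the allowances \(\eta_v\). A vertexwise plan is history robust when each vertex \(v\) is robust for \((b_\sigma(v),\eta_v)\). The local characterization then gives \(\ell(v)\ge h_{\eta_v}(b_\sigma(v))\) for every \(v\), and summing over \(V\) yields \(S(\ell)\ge H(\sigma)\). This bound is attained by taking \(\ell(v)=h_{\eta_v}(b_\sigma(v))\), so it is exact.

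The main obstacle is interpretive rather than computational. One has to make precise that \(F\) being independent of \(L\) lets us condition on \(F\) while \(L\) stays uniform, as guaranteed by \cref{eq:history-independence}. One also has to make sure that sufficiency holds for every admissible \(F\), including random ones, which the averaging argument above handles.
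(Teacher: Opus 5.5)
Your proof is correct and follows the same route as the paper. Both compute the exact failure probability $\binom{b}{\ell}/\binom{\Q}{\ell}$, observe that it depends only on $|F|=b$, identify $h_\eta(b)$ as the least feasible size, and sum over vertices. Your ratio argument showing that $\ell\mapsto\binom{b}{\ell}/\binom{\Q}{\ell}$ is nonincreasing is a useful addition: the paper only calls $h_\eta(b)$ the smallest feasible integer and leaves implicit the upward closure needed for the ``if'' direction.
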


\begin{theorem}[Analytic near optimality]
\label{thm:local-lower}
For a history robust rule with \(\ell\leq b/2\),
\begin{equation}
\ell
\geq
\frac{\log(1/\eta)}
{\log(2\Q/b)}.
\label{eq:local-lower-bound}
\end{equation}
where the denominator reflects the largest possible factor between
sampling with and without replacement. If \(b\leq\Q/2\) and the sufficient
budget in \cref{eq:local-calibration} is at most \(b/2\), then the
\PSAPST\ budget is at most twice this lower bound, apart from the ceiling.
\end{theorem}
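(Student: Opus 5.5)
The plan is to take a history robust list of size \(\ell\le b/2\) and apply the exact characterization in \cref{thm:exact-local-optimum}. A fixed forbidden set \(F\) of size \(b\) then has \(\binom{b}{\ell}/\binom{\Q}{\ell}\le\eta\). We lower bound this ratio by writing it as a product:
\begin{equation*}
\frac{\binom{b}{\ell}}{\binom{\Q}{\ell}}
=
\prod_{i=0}^{\ell-1}\frac{b-i}{\Q-i}.
\end{equation*}
Each factor satisfies \(\frac{b-i}{\Q-i}\ge\frac{b-i}{\Q}\). For \(i\le \ell-1< b/2\) we also have \(b-i> b/2\), so every factor is at least \(b/(2\Q)\).

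This gives \(\eta\ge (b/(2\Q))^\ell\). Taking logarithms yields \(\log(1/\eta)\le \ell\log(2\Q/b)\), which is \cref{eq:local-lower-bound}. The denominator is positive because \(b\le\Q-1\). This explains the factor \(2\): it bounds how far sampling without replacement from a half-depleted forbidden set can fall below the with-replacement rate \(b/\Q\).

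For the comparison with \PSAPST, suppose \(b\le\Q/2\). The sufficient budget of \cref{eq:local-calibration}, before the ceiling, is \(\log(1/\eta)/\log(\Q/b)\). We compare the two denominators. Set \(x=\log(\Q/b)\); the hypothesis \(b\le\Q/2\) gives \(x\ge\log 2\). Then
\[
\log(2\Q/b)=\log 2+x\le 2x.
\]
Hence
\[
\frac{\log(1/\eta)}{\log(\Q/b)}\le 2\,\frac{\log(1/\eta)}{\log(2\Q/b)}.
\]
Any history robust rule in the regime \(\ell\le b/2\) is at least the right-hand lower bound. The \PSAPST\ size is the ceiling of the left side, which gives the factor-two claim "apart from the ceiling."

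One point must be checked. The lower bound applies to every history robust \(\ell\le b/2\), in particular the optimal \(h_\eta(b)\) whenever \(h_\eta(b)\le b/2\). The hypothesis that the sufficient budget is at most \(b/2\) secures this: since \(h_\eta(b)\le\lambda_\eta(b)\le b/2\), the optimum lies in the regime where the bound holds. The inequality \(h_\eta(b)\le\lambda_\eta(b)\) follows from \(\binom{b}{\ell}/\binom{\Q}{\ell}\le(b/\Q)^\ell\), the same product with upper bounds \(\frac{b-i}{\Q-i}\le\frac{b}{\Q}\). So \(\lambda\) is history robust and is at least the minimum. I expect this bookkeeping about which \(\ell\) lies in the regime \(\ell\le b/2\) to be the only delicate point. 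The rest is the elementary factor bound.
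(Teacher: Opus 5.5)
Your proof is correct and follows the paper's argument essentially step for step: the same product expansion with the factorwise bound \((b-j)/(\Q-j)\ge b/(2\Q)\) when \(\ell\le b/2\), and the same comparison \(\log(2\Q/b)\le 2\log(\Q/b)\) when \(b\le\Q/2\). Your extra check that \(h_\eta(b)\le\lambda_\eta(b)\le b/2\) is also valid; it shows that the exact optimum itself lies in the regime where the lower bound applies, and so explains the role of that hypothesis, which the paper's proof leaves implicit.
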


\begin{remark}[Scope of the lower bound]
\label{rem:lower-bound-scope}
\Cref{thm:exact-local-optimum,thm:local-lower} characterize fixed
exposure profiles for independent uniform lists under history robust
greedy recovery. They do not establish optimality over correlated vertex
lists, nonuniform color distributions, or recovery algorithms that revise
earlier choices. Such extensions break the conditional product structure
used by the proof and remain open.
\end{remark}

For \(\beta\in(0,1)\), define the dense exposure set
\begin{equation}
H_\beta
=
\bigl\{
v\in V:b_\sigma(v)\geq\beta\Q
\bigr\}.
\label{eq:dense-exposure-set}
\end{equation}
where \(H_\beta\) contains vertices whose earlier neighborhood covers a
constant fraction of the palette in the worst case.

\begin{corollary}[Dense exposure barrier]
\label{cor:dense-barrier}
Suppose every vertex uses a history robust uniform list with local error at
most \(\eta\). Define
\begin{equation}
s_\beta
=
\min\!\left\{
\left\lfloor\frac{\beta\Q}{2}\right\rfloor+1,
\left\lceil
\frac{\log(1/\eta)}{\log(2/\beta)}
\right\rceil
\right\}.
\label{eq:dense-minimum-list}
\end{equation}
where \(s_\beta\) is the minimum list burden forced by dense exposure. Then
\begin{align}
\sum_{v\in H_\beta}\ell(v)
&\geq
s_\beta\lvert H_\beta\rvert,
\label{eq:dense-storage-lower}\\
\E\lvert E_L\rvert
&\geq
\lvert E(H_\beta)\rvert\,
\rho_{\Q}(s_\beta,s_\beta).
\label{eq:dense-conflict-lower}
\end{align}
where \(E(H_\beta)\) is the edge set induced by \(H_\beta\), and
\(\rho_{\Q}\) is defined in \cref{eq:intersection-probability}.
\end{corollary}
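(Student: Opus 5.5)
The plan is to reduce both inequalities to a single pointwise claim: every \(v\in H_\beta\) has \(\ell(v)\geq s_\beta\). The storage bound \cref{eq:dense-storage-lower} then follows by summing over \(H_\beta\). The conflict bound follows from monotonicity of \(\rho_{\Q}\) together with linearity of expectation over edges.

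\textbf{Step 1 (pointwise list burden).} Fix \(v\in H_\beta\) and write \(b=b_\sigma(v)\geq\beta\Q\). History robustness at \((b,\eta)\) means \(\ell(v)\geq h_\eta(b)\) by \cref{thm:exact-local-optimum}. I would first note that \(h_\eta(b)\) is nondecreasing in \(b\), since \(\binom{b}{\ell}/\binom{\Q}{\ell}\) is increasing in \(b\). Hence it suffices to treat \(b'=\lceil\beta\Q\rceil\), or simply to argue directly with \(b\). Now split into two cases.

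\emph{Case \(\ell(v)>\lfloor b/2\rfloor\).} Then \(\ell(v)\geq\lfloor b/2\rfloor+1\geq\lfloor\beta\Q/2\rfloor+1\geq s_\beta\).

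\emph{Case \(\ell(v)\leq b/2\).} Then \cref{thm:local-lower} gives
\[
\ell(v)\geq\frac{\log(1/\eta)}{\log(2\Q/b)}\geq\frac{\log(1/\eta)}{\log(2/\beta)},
\]
because \(\Q/b\leq 1/\beta\). Since \(\ell(v)\) is an integer, it is at least the ceiling, and therefore at least \(s_\beta\).

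Either way \(\ell(v)\geq s_\beta\). Care is needed with the floor: \(b\geq\beta\Q\) implies \(\lfloor b/2\rfloor\geq\lfloor\beta\Q/2\rfloor\) because \(b\) is an integer. If \(\eta\) varies by vertex, the bound uses the common upper bound \(\eta\), since a smaller error only increases the requirement.

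\textbf{Step 2 (conflicts).} By linearity of expectation, \(\E|E_L|\geq\sum_{\{u,v\}\in E(H_\beta)}\Prb[L(u)\cap L(v)\neq\varnothing]\), dropping all other edges as nonnegative terms. The lists are independent uniform subsets of sizes \(\ell(u),\ell(v)\), so each term equals \(\rho_{\Q}(\ell(u),\ell(v))\) by \cref{eq:intersection-probability}, which I take to be \(1-\binom{\Q-a}{b}/\binom{\Q}{b}\). This quantity is nondecreasing in each argument: enlarging a list can only create intersections, which can be seen by coupling a size-\(a\) list inside a size-\((a+1)\) list. Combining this monotonicity with Step 1 gives each term at least \(\rho_{\Q}(s_\beta,s_\beta)\), which yields \cref{eq:dense-conflict-lower}. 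Independence between \(L(u)\) and \(L(v)\) holds because the plan is measurable with respect to \(\Omega_{\mathrm{aux}}\) only, and lists are sampled independently given the plan.

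\textbf{Expected obstacle.} Step 1 is the delicate part. It requires matching the hypothesis \(\ell\leq b/2\) of \cref{thm:local-lower} to the threshold \(\lfloor\beta\Q/2\rfloor+1\) in \(s_\beta\), and handling the floor/ceiling and integrality correctly. I would check the boundary case \(\ell=\lfloor b/2\rfloor\), where \(\ell\leq b/2\) still holds and so the analytic bound applies. Step 2 is routine once monotonicity of \(\rho_{\Q}\) is recorded.
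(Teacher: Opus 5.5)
Your proposal is correct and follows the paper's proof essentially step for step. It uses the same case split on whether \(\ell(v)\) exceeds \(b/2\), the same appeal to \cref{thm:local-lower} with \(\Q/b\leq 1/\beta\) when the list is small, and the same monotonicity-plus-linearity argument for the conflict bound. Your added care makes explicit what the paper leaves implicit: integer rounding to the ceiling, \(\lfloor b/2\rfloor\geq\lfloor\beta\Q/2\rfloor\) (which holds by monotonicity of the floor, without needing \(b\) to be an integer), and vertex-dependent allowances \(\eta_v\leq\eta\).
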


\subsection{Conflict graph theory}

For two independent uniform lists of sizes \(x\) and \(y\), define
\begin{equation}
\rho_{\Q}(x,y)
=
\begin{cases}
\displaystyle
1-\frac{\binom{\Q-x}{y}}{\binom{\Q}{y}},
& x+y\leq\Q,\\[8pt]
1,
& x+y>\Q.
\end{cases}
\label{eq:intersection-probability}
\end{equation}
where \(\rho_{\Q}(x,y)\) is their exact intersection probability.
Figure~\ref{fig:hypergeometric} displays its nonlinear dependence on both
endpoint budgets.

\begin{figure}[t]
\centering
\includegraphics[width=0.67\textwidth]{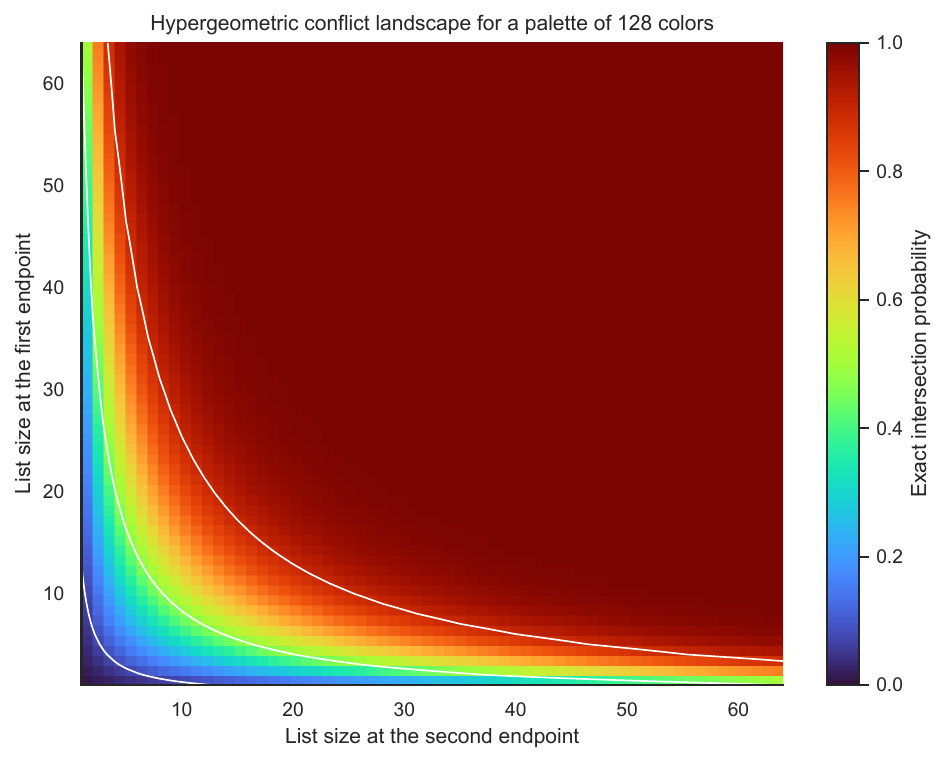}
\caption{Exact list intersection probability \(\rho_{128}(x,y)\) for two independent uniform lists. Color intensity gives the probability that endpoint lists of sizes \(x\) and \(y\) intersect. White contours mark probabilities \(0.1\), \(0.5\), and \(0.9\). The probability grows approximately with \(xy/128\) for small lists and equals one when \(x+y>128\), which explains why reducing both endpoint budgets can sharply thin the conflict graph.}
\label{fig:hypergeometric}
\end{figure}

\begin{theorem}[Conflict expectation and concentration]
\label{thm:conflict-theory}
Conditioned on any valid plan,
\begin{equation}
\E\lvert E_L\rvert
=
\sum_{\{u,v\}\in E}
\rho_{\Q}\!\left(\ell(u),\ell(v)\right).
\label{eq:exact-conflict-expectation}
\end{equation}
where the expectation is over independent uniform vertex lists. Moreover,
\begin{equation}
\Prb\!\left(
\left|
\lvert E_L\rvert-\E\lvert E_L\rvert
\right|\geq t
\right)
\leq
2\exp\!\left(
-\frac{2t^2}{\sum_{v\in V}d(v)^2}
\right).
\label{eq:conflict-concentration}
\end{equation}
where \(d(v)\) is the degree of \(v\) in the input graph.
\end{theorem}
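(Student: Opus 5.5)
The plan has two independent parts: an exact expectation computed edge by edge, and a concentration bound from McDiarmid's bounded differences inequality. Throughout, condition on a fixed realization of the plan, that is, on \((G,\Omega_{\mathrm{aux}})\). By the product structure \cref{eq:product-space}, the lists \(L(v)\) are then mutually independent, and each is a uniformly random \(\ell(v)\)-subset of \([\Q]\). Since the budgets are fixed under this conditioning, \(\lvert E_L\rvert\) is a function of the independent inputs \(\{L(v)\}_{v\in V}\) alone.

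For the expectation, write \(\lvert E_L\rvert=\sum_{\{u,v\}\in E}\mathbf{1}[L(u)\cap L(v)\neq\varnothing]\) and apply linearity. The task is then to show \(\Prb(L(u)\cap L(v)\neq\varnothing)=\rho_{\Q}(\ell(u),\ell(v))\).
\begin{itemize}
\item Condition on \(L(u)\), which has size \(x=\ell(u)\). The list \(L(v)\) of size \(y\) avoids \(L(u)\) exactly when it is a \(y\)-subset of the \(\Q-x\) remaining colors. This happens with probability \(\binom{\Q-x}{y}/\binom{\Q}{y}\), and the value does not depend on which set \(L(u)\) is.
\item If \(x+y>\Q\), the two lists must intersect by pigeonhole. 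This is consistent with \(\binom{\Q-x}{y}=0\), so the case split in \cref{eq:intersection-probability} is matched exactly.
\end{itemize}
Summing over \(E\) gives \cref{eq:exact-conflict-expectation}.

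For concentration, apply McDiarmid's bounded differences inequality to \(f(\{L(v)\})=\lvert E_L\rvert\), viewed as a function of the \(n\) independent list variables.
\begin{itemize}
\item Replacing the single coordinate \(L(v)\) by any other \(\ell(v)\)-subset can only change the indicators of edges incident to \(v\).
\item There are \(d(v)\) such edges, each indicator lies in \(\{0,1\}\), and so \(f\) changes by at most \(c_v=d(v)\).
\end{itemize}
McDiarmid then gives \(\Prb(|f-\E f|\ge t)\le 2\exp(-2t^2/\sum_v c_v^2)\), which is exactly \cref{eq:conflict-concentration}.

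The only delicate step is justifying the use of McDiarmid after conditioning on the plan, since the plan may depend on \(G\) and on tie-breaking randomness. This is handled by the product-space assumption: the plan is measurable with respect to \(\Omega_{\mathrm{aux}}\) only. Hence, conditionally on the plan, the list coordinates are independent with fixed marginal laws, and the bounded-difference constants \(d(v)\) are deterministic. Both statements therefore hold conditionally on every plan realization, and hence also unconditionally.
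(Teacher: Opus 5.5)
Your proposal is correct and follows the paper's proof essentially step for step: linearity over edge indicators, with the count $\binom{\Q-x}{y}$ of $y$-subsets avoiding a fixed $L(u)$, and McDiarmid's inequality with bounded-difference constant $d(v)$ for each independent list coordinate. One small caution: your closing claim that both statements also hold unconditionally is only right if $\E\lvert E_L\rvert$ is read as the conditional mean given the plan, because the budgets, and hence the mean, can vary with the auxiliary tie-breaking; since the theorem is stated conditionally on the plan, this does not affect your proof.
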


The expectation also obeys
\begin{equation}
\E\lvert E_L\rvert
\leq
\min\!\left\{
m,
\frac{1}{\Q}
\sum_{\{u,v\}\in E}\ell(u)\ell(v)
\right\}.
\label{eq:conflict-upper}
\end{equation}
where \(m=|E|\). The bound translates a sparse exposure profile into a
small expected conflict graph whenever endpoint list products remain below
the palette size.

\subsection{Algorithm workflow}
\label{sec:algorithm-workflow}

\begin{figure}[t]
\centering
\includegraphics[width=\textwidth]{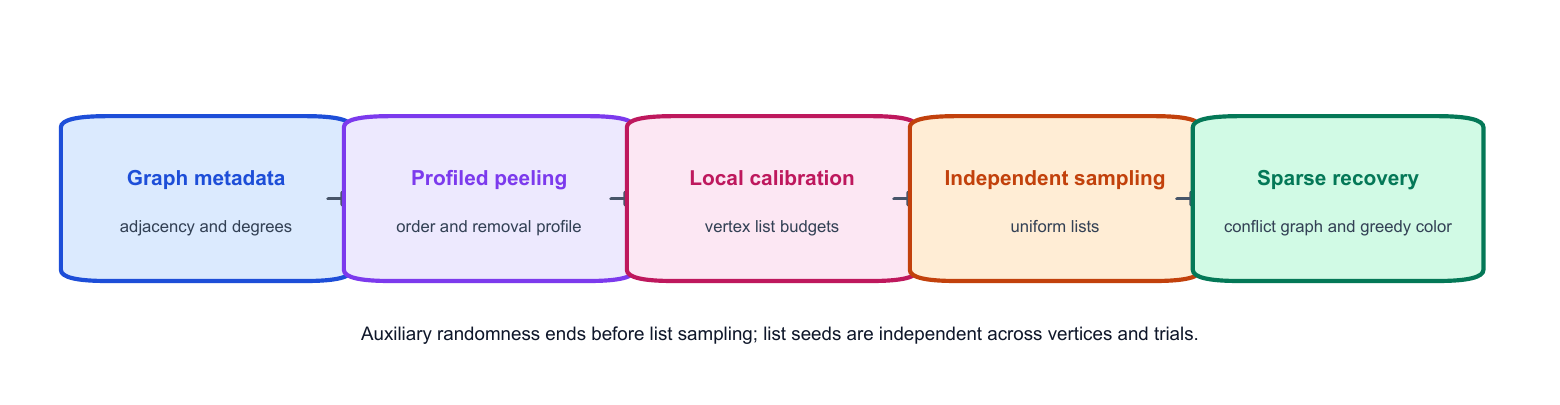}
\caption{Workflow of the certified \PSAPST\ instantiation. Structural planning fixes the order and exposure profile before independent list sampling begins. Degree, core, and selector variants retain the same calibration and replace only the order construction stage.}
\label{fig:algorithm-pipeline}
\end{figure}

\paragraph{Profile construction.}
\PSAPST\ repeatedly removes a current minimum degree vertex and records its
residual degree. Random tie breaking is drawn only from
\(\Omega_{\mathrm{aux}}\). Reversing the removal sequence produces the
greedy order and, by \cref{thm:profile-identity}, the exact backward
exposure of every vertex.

\paragraph{Lite construction.}
P-SAPST Lite uses the same exposure calibration after sorting vertices by
descending degree. It avoids heap based peeling and therefore does not
satisfy the profile identity in \cref{thm:profile-identity}. It is a low
latency implementation choice rather than a replacement for the theorem
bearing profile order.

\begin{proposition}[Conditional Lite transfer]
\label{prop:lite-transfer}
Let \(b_{\mathrm{L}}(v)\) be the backward exposure under the P-SAPST Lite order. Fix \(\theta\in(0,1]\). If \(b_{\mathrm{L}}(v)=0\) whenever \(p(v)=0\), and every remaining vertex satisfies
\begin{equation}
b_{\mathrm{L}}(v)
\leq
\Q^{1-\theta}p(v)^\theta,
\label{eq:lite-exposure-dominance}
\end{equation}
where \(p(v)\) is the peeling profile, then
\begin{equation}
\lambda_\eta\!\left(b_{\mathrm{L}}(v)\right)
\leq
\left\lceil
\frac{\lambda_\eta(p(v))}{\theta}
\right\rceil.
\label{eq:lite-budget-transfer}
\end{equation}
where the same local error allowance \(\eta\) is used by both orders.
\end{proposition}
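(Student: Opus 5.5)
The argument is a direct comparison of the two local calibrations, separated by the case structure of \cref{eq:local-calibration}. First I treat \(p(v)=0\). Then the hypothesis gives \(b_{\mathrm{L}}(v)=0\), so both sides involve \(\lambda_\eta(0)=1\), and the right side is \(\lceil 1/\theta\rceil\geq 1\). The claim holds trivially. The same shortcut applies whenever \(b_{\mathrm{L}}(v)=0\) with \(p(v)\geq1\): the left side is \(1\) and the right side is at least \(\lceil 1/\theta\rceil\geq1\). So I may assume \(1\leq p(v)\leq \Q-1\) and \(1\leq b_{\mathrm{L}}(v)\leq\Q-1\). The second upper bound holds because \(b_{\mathrm{L}}(v)\) is at most \(\deg(v)\leq\Delta\).

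The key step is a potential comparison. Dividing \cref{eq:lite-exposure-dominance} into \(\Q\) gives
\begin{equation*}
\frac{\Q}{b_{\mathrm{L}}(v)}\geq \Q^{\theta}p(v)^{-\theta}=\left(\frac{\Q}{p(v)}\right)^{\theta},
\end{equation*}
so \(\log(\Q/b_{\mathrm{L}}(v))\geq\theta\log(\Q/p(v))>0\). Write \(x=\log(1/\eta)/\log(\Q/p(v))\), so that \(\lambda_\eta(p(v))=\min\{\Q,\lceil x\rceil\}\). Then the unclipped Lite quantity satisfies
\begin{equation*}
\frac{\log(1/\eta)}{\log(\Q/b_{\mathrm{L}}(v))}\leq \frac{x}{\theta}.
\end{equation*}

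Next comes the ceiling bookkeeping, which I expect to be the only delicate point, together with the \(\min\{\Q,\cdot\}\) clipping. In the unclipped case \(\lambda_\eta(p(v))=\lceil x\rceil\). Since \(x\leq\lceil x\rceil\), monotonicity of the ceiling gives \(\lceil x/\theta\rceil\leq\lceil\lceil x\rceil/\theta\rceil\). The Lite budget is at most \(\lceil x/\theta\rceil\), with its own clipping only decreasing it, so the bound follows. In the clipped case \(\lambda_\eta(p(v))=\Q\), and the right side is \(\lceil \Q/\theta\rceil\geq\Q\). Every Lite budget is at most \(\Q\) by definition, so the inequality again holds. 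Combining the cases yields \cref{eq:lite-budget-transfer} for every vertex. The argument never uses the profile identity for the Lite order, only the stated pointwise dominance, which is why the transfer is conditional.
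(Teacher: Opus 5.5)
Your proof is correct and follows essentially the same route as the paper. Both arguments use the potential comparison \(\log(\Q/b_{\mathrm{L}}(v))\geq\theta\log(\Q/p(v))\), then handle the uncapped and capped peeling budgets separately, and dispose of \(p(v)=0\) by hypothesis. You also treat explicitly the case \(b_{\mathrm{L}}(v)=0\) with \(p(v)\geq1\), which the paper's division by \(\log(\Q/b_{\mathrm{L}}(v))\) leaves implicit; this is a small but welcome tightening.
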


The proposition supplies a sufficient condition, not a universal guarantee for degree ordering. Adversarial graphs can violate \cref{eq:lite-exposure-dominance} for every constant \(\theta\), so Lite has no unconditional approximation to the peeling budget. P-SAPST Lite is a low latency framework instance, while full \PSAPST\ is the certified instance with an unconditional profile identity.

\paragraph{Budget assignment.}
The method sets
\begin{equation}
\ell(v)
=
\lambda_{\delta/n}\!\left(p(v)\right).
\label{eq:psapst-budget}
\end{equation}
where \(p(v)\) is the removal profile. This phase uses no sampled color.

\paragraph{Sampling and recovery.}
Each vertex independently samples \(\ell(v)\) colors without replacement.
The conflict graph is obtained by retaining input edges whose lists
intersect, and greedy recovery follows the reverse peeling order.

\paragraph{Selector extension.}
The experiments also report a more expensive extension that compares
degree, core, and peeling orders. It minimizes the pair
\begin{equation}
\left(
\sum_{\{u,v\}\in E}
\rho_{\Q}\!\left(\ell(u),\ell(v)\right),
\sum_{v\in V}\ell(v)
\right)
\label{eq:selector-objective}
\end{equation}
where the first coordinate is exact expected conflict size and the second
is total list volume. This selector is an empirical extension, not the
method required by the profile theorems.

\subsection{Computational cost}

\begin{proposition}[Implementation complexity]
\label{prop:complexity}
Heap based profile construction requires \(O((n+m)\log n)\) time and
\(O(n+m)\) memory. Budget computation takes \(O(n+m)\) time, list sampling
takes expected \(O(S_{\mathrm{P}})\) time, and greedy recovery takes
expected \(O(m+S_{\mathrm{P}})\) time. With hashed lists, direct conflict
construction takes
\begin{equation}
O\!\left(
\sum_{\{u,v\}\in E}
\min\{\ell(u),\ell(v)\}
\right).
\label{eq:conflict-construction-cost}
\end{equation}
where the sum is the cost of intersecting endpoint lists through the
smaller set.
\end{proposition}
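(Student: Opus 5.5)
The statement to prove is \cref{prop:complexity}. I would treat each phase of the pipeline in \cref{sec:algorithm-workflow} on its own and add up the costs. The graph is stored as adjacency lists in \(O(n+m)\) memory.

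\textbf{Profile construction.} I would keep a binary min-heap keyed by current residual degree, with the decrease-key operation, and a boolean array marking removed vertices.
\begin{itemize}
\item Each removal is one extract-min costing \(O(\log n)\). At that moment the current key is recorded as \(p(r_i)\), exactly as in \cref{eq:removal-profile}.
\item Each removal then scans the neighbours of the removed vertex and decrements the key of every neighbour not yet removed.
\item Every edge causes at most one decrement, since it is scanned only when its first endpoint is removed. This gives at most \(m\) decrease-key operations, each costing \(O(\log n)\).
\end{itemize}
Random tie breaking from \(\Omega_{\mathrm{aux}}\) can be folded into the heap key as a secondary random label, so it adds only constant work per operation. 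The total is \(O((n+m)\log n)\) time. Memory is the heap plus the adjacency lists, which is \(O(n+m)\). Reversing the removal sequence takes \(O(n)\).

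\textbf{Budget computation.} Each \(\lambda_{\delta/n}(p(v))\) is a closed-form expression from \cref{eq:local-calibration}, so it costs \(O(1)\) arithmetic operations. Counting the read of the graph, this gives \(O(n+m)\).

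\textbf{Sampling.} I would draw each \(\ell(v)\)-subset of \([\Q]\) by rejection into a hash set: draw uniform colours and keep only new ones. The subtle case is when \(\ell(v)\) is close to \(\Q\).
\begin{itemize}
\item If \(\ell(v)\le \Q/2\), each draw is new with probability at least \(1/2\). The expected number of draws is therefore \(O(\ell(v))\).
\item If \(\ell(v)>\Q/2\), I would instead sample the complement of size \(\Q-\ell(v)\) in the same way and output \([\Q]\) minus that set. This costs \(O(\Q)=O(\ell(v))\).
\end{itemize}
An alternative is a partial Fisher–Yates shuffle over a lazily initialised hash map, which gives the same bound. Summing over vertices gives expected \(O(S_{\mathrm P})\).

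\textbf{Greedy recovery.} Processing \(v\) in \(\sigma_{\mathrm{peel}}\) takes three steps.
\begin{enumerate}
\item Scan the earlier neighbours and insert their assigned colours into a scratch hash set. This costs \(O(d(v))\) expected time.
\item Scan \(L(v)\) for a colour not in the scratch set. This costs \(O(\ell(v))\) expected time.
\item Clear the scratch set by undoing the insertions, again at cost \(O(d(v))\).
\end{enumerate}
Summing over all vertices gives expected \(O(m+S_{\mathrm P})\). Correctness of the greedy step is not needed here, only its running time.

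\textbf{Conflict construction.} For each edge \(\{u,v\}\), iterate over the smaller list and test membership in the other endpoint's hashed list. This costs expected \(O(\min\{\ell(u),\ell(v)\})\) per edge. Summing over all edges gives \cref{eq:conflict-construction-cost}.

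\textbf{Main obstacle.} The only delicate point is that the sampling and hashing bounds are expectations, and rejection sampling degrades as \(\ell(v)\to\Q\). The complement trick above resolves this, and I would state it explicitly. The \(O(\log n)\) factor comes from the heap. A bucket queue would give linear time, but the heap bound is what the proposition claims, so I would not pursue it.
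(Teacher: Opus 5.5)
Your proposal is correct and follows the same phase-by-phase accounting as the paper's proof. In both, peeling costs $O(n+m)$ heap operations at logarithmic cost, budgets come from a single scan, greedy recovery scans the lists and the adjacency structure, and conflict construction tests membership through the smaller endpoint list. You give more detail than the paper in two places: the explicit bound of at most $m$ decrease-key operations, and the complement trick that keeps sampling at expected $O(\ell(v))$ when $\ell(v)$ is close to $\Q$, which the paper compresses into the remark that sampling touches each stored color once.
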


\begin{proposition}[Repeated query crossover]
\label{prop:repeated-query-crossover}
Let \(M\) and \(R\) be two methods with costs defined by \cref{eq:repeated-query-cost}. If \(q_M<q_R\), then
\begin{equation}
T_M(r)<T_R(r)
\quad\text{whenever}\quad
r>
\frac{(P_M-P_R)_+}{q_R-q_M}.
\label{eq:repeated-query-threshold}
\end{equation}
where \((x)_+=\max\{x,0\}\). Thus a more expensive structural plan has an explicit finite crossover whenever it reduces the subsequent query cost.
\end{proposition}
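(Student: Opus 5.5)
The claim is elementary linear algebra on the two affine cost functions in \cref{eq:repeated-query-cost}, so the plan is to compare them directly. Write the difference
\begin{equation*}
T_R(r)-T_M(r)=(P_R-P_M)+r\,(q_R-q_M).
\end{equation*}
By hypothesis \(q_M<q_R\), so the coefficient \(q_R-q_M\) is strictly positive and the difference is strictly increasing in \(r\). The goal is to show that this difference is strictly positive whenever \(r\) exceeds the stated threshold.

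I would split into two cases according to the sign of \(P_M-P_R\). If \(P_M\le P_R\), then \((P_M-P_R)_+=0\), and the condition reduces to \(r>0\). In this case \(P_R-P_M\ge0\) and \(r(q_R-q_M)>0\), so the difference is positive. If instead \(P_M>P_R\), then \((P_M-P_R)_+=P_M-P_R\). The hypothesis \(r>(P_M-P_R)/(q_R-q_M)\) can be multiplied by the positive quantity \(q_R-q_M\) to give \(r(q_R-q_M)>P_M-P_R\), which is exactly \(T_R(r)-T_M(r)>0\).

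There is no real obstacle here. The only point needing care is the division by \(q_R-q_M\), which is legitimate because this quantity is strictly positive. I would also note that the threshold is finite because the denominator is nonzero. This finiteness justifies the closing sentence of the proposition: a method that pays a larger planning cost \(P_M\) but has a smaller per-query cost \(q_M\) overtakes \(R\) after finitely many queries. I would remark in passing that \(r\) ranges over nonnegative integers, although the argument holds verbatim for any real \(r\).
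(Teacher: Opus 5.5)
Your proposal is correct and follows the paper's proof: both subtract the two affine costs and use \(q_R-q_M>0\) to conclude that the difference has the right sign beyond the threshold. Your explicit case split on the sign of \(P_M-P_R\) only spells out how the positive part \((P_M-P_R)_+\) is handled, which the paper's one-line argument leaves implicit.
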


\section{Experimental Evaluation}
\label{sec:experiments}

\subsection{Questions and named graph collections}

The experiments examine list compression, greedy success, conflict graph size, statistical agreement with the exact expectation, and the time spent in each algorithm phase. They also test whether the benefit follows the gap between degeneracy and maximum degree, how closely each structural plan approaches the exact history robust profile optimum, and when preprocessing is amortized across repeated list samples.

We name the synthetic collection SAPBench. SAPBench Core contains cliques,
disjoint clique unions, Erd\H{o}s and R\'enyi graphs
\citep{erdds1959random}, random regular graphs, Barab\'asi and Albert graphs
\citep{barabasi1999emergence}, stochastic block models
\citep{holland1983stochastic}, regular bipartite graphs, unions of three
random spanning trees, rectangular grids, and a core fringe construction.
The core fringe family joins a clique of size approximately \(\sqrt n\) to
a sparse exterior. Synthetic core instances use
\begin{equation}
n\in\{128,256,512\}.
\label{eq:core-sizes}
\end{equation}
where the three sizes expose finite size behavior. Four small networks
distributed with NetworkX supplement these graphs
\citep{hagberg2007exploring}.

SAPBench Scale contains eight sparse families. It uses the random, regular, power law, block, bipartite, bounded arboricity, and grid families above, together with a hub forest. A hub forest partitions leaves among approximately \(\sqrt n\) centers and provides the separation described by \cref{thm:forest-separation}. The scale sizes are
\begin{equation}
n\in\{1{,}000,10{,}000,50{,}000\}.
\label{eq:scale-sizes}
\end{equation}
where the largest instances test list and conflict construction beyond the toy graph regime.

SAPBench Stress isolates four sparse families. It contains Erd\H{o}s and R\'enyi, Barab\'asi and Albert, bounded arboricity, and hub forest instances at
\begin{equation}
n\in\{100{,}000,250{,}000\}.
\label{eq:stress-sizes}
\end{equation}
where the largest graph contains \(1{,}251{,}868\) edges. This suite compares calibrated APST, P-SAPST Lite, and full \PSAPST.

The external suite contains the SNAP collaboration network ca HepTh and the email Enron communication network \citep{leskovec2016snap}. The graphs have 9,877 and 36,692 vertices, respectively. These data are evaluated separately from SAPBench so that synthetic scale results and observed network results remain distinguishable.

\subsection{Methods and trial structure}

Uniform list greedy uses one common logarithmic list size and a random greedy order. It is not labeled as the palette sparsification theorem because that theorem does not require greedy recovery. Proof APST implements the reciprocal rank rule with its theorem constant \(40\). Calibrated APST uses the same engineering scale as the remaining methods. P-SAPST Lite orders vertices by descending degree and applies the local exposure calibration without peeling. Core aware uses descending core number. Full \PSAPST\ uses the reverse peeling profile and therefore carries the profile identity from \cref{thm:profile-identity}. The P-SAPST selector is the optional three order extension from \cref{eq:selector-objective}.

For SAPBench Core, the scale grid is
\begin{equation}
\alpha\in\{0.50,0.75,1.00,1.25,1.50\},
\qquad
\delta=0.05.
\label{eq:experimental-grid}
\end{equation}
where \(\alpha=1\) is the theorem scale. Every graph, method, and scale uses
five independent auxiliary plans and six independent list samples for each
plan. Thus every configuration has thirty list trials. SAPBench Scale fixes
\(\alpha=1\), uses one auxiliary plan, and draws thirty independent list
samples.

There are 34 core instances and 24 scale instances. The main record count is
\begin{equation}
34\cdot7\cdot5\cdot5\cdot6
+
24\cdot6\cdot30
=
40{,}020.
\label{eq:main-run-count}
\end{equation}
where the factors represent instances, methods, scales, auxiliary plans, and list trials in their respective suites. SAPBench Stress contributes
\begin{equation}
8\cdot3\cdot10=240
\label{eq:stress-run-count}
\end{equation}
where the factors denote stress instances, methods, and independent list trials. The SAPBench subtotal is therefore \(40{,}260\) runs.

The two SNAP graphs contribute
\begin{equation}
2\cdot3\cdot10=60
\label{eq:real-run-count}
\end{equation}
where the factors denote networks, methods, and independent list trials. The complete evaluation therefore contains \(40{,}320\) runs.

\subsection{Metrics}

The total and maximum list volumes are
\begin{align}
S_L
&=
\sum_{v\in V}\lvert L(v)\rvert,
\label{eq:metric-total-list}\\
M_L
&=
\max_{v\in V}\lvert L(v)\rvert.
\label{eq:metric-max-list}
\end{align}
where \(S_L\) measures aggregate storage and \(M_L\) measures the largest
local burden. We also report the vertexwise 95th and 99th percentiles for
each plan. Greedy success means that no vertex requires a color outside its
sampled list.

The compact payload ratio is
\begin{equation}
R_{\mathrm{payload}}
=
\frac{
S_L\lceil\log_2\Q\rceil
+
2\lvert E_L\rvert\lceil\log_2 n\rceil
}{
2m\lceil\log_2 n\rceil
}.
\label{eq:payload-ratio}
\end{equation}
where the numerator stores sampled colors and conflict edge endpoints, and
the denominator stores the input edge list under the same integer coding
model. This idealized representation isolates algorithmic payload from
language and container overhead. It is not a prediction of resident Python
memory, and an optimized implementation may encode both adjacency and
conflict edges in compressed sparse formats.

For success curves and success matched comparisons, we report two sided
95 percent Wilson intervals. Running time is divided into order
construction, budget computation, list sampling, conflict construction,
and greedy recovery. Timings are wall clock measurements from a single
Python process on a 12th Gen Intel Core i5-12500H and do not include
parallel speedup. All figures are
generated directly by the experiment project without background grid
lines.

\subsection{Core results}

\begin{table}[H]
\centering
\caption{SAPBench Core at \(\alpha=1\). The relative column measures the mean list reduction from calibrated APST. Conflict is the retained edge fraction, and payload is defined by \cref{eq:payload-ratio}.}
\label{tab:main-results}
\begingroup
\setlength{\tabcolsep}{4pt}
\begin{tabular}{lrrrrrr}
\toprule
Method & Mean & vs APST & Conflict & Success & Payload & ms \\
\midrule
Uniform list greedy & 8.03 & 47.0\% & 0.913 & 90.7\% & 1.682 & 18.55 \\
Proof APST & 41.25 & -172.3\% & 1.000 & 100.0\% & 3.285 & 54.73 \\
Calibrated APST & 15.15 & 0.0\% & 0.977 & 100.0\% & 2.375 & 32.16 \\
P-SAPST Lite & 7.70 & 49.2\% & 0.746 & 99.6\% & 1.210 & 24.06 \\
Core aware & 7.70 & 49.2\% & 0.747 & 100.0\% & 1.211 & 24.75 \\
P-SAPST & 7.94 & 47.6\% & 0.788 & 99.8\% & 1.271 & 29.74 \\
P-SAPST selector & 7.69 & 49.2\% & 0.743 & 99.8\% & 1.206 & 67.91 \\
\bottomrule
\end{tabular}

\endgroup
\end{table}

\begin{table}[H]
\centering
\caption{Vertexwise list tail statistics on SAPBench Core at \(\alpha=1\). Each entry is averaged over graph instances and auxiliary plans.}
\label{tab:list-tail-results}
\begin{tabular}{lrrr}
\toprule
Method & P95 & P99 & Mean maximum \\
\midrule
Uniform list greedy & 8.03 & 8.03 & 8.03 \\
Proof APST & 43.76 & 43.76 & 43.76 \\
Calibrated APST & 31.50 & 43.76 & 43.76 \\
P-SAPST Lite & 20.18 & 33.37 & 33.78 \\
Core aware & 20.18 & 33.36 & 33.88 \\
P-SAPST & 19.56 & 32.59 & 32.74 \\
P-SAPST selector & 20.13 & 33.34 & 33.80 \\
\bottomrule
\end{tabular}

\end{table}

Table~\ref{tab:main-results} shows that \PSAPST\ uses 7.94 colors per vertex
at the theorem scale, while calibrated APST uses 15.15. This is a 47.6
percent reduction. The mean retained edge fraction decreases from 0.977 to
0.788, and the payload ratio decreases from 2.375 to 1.271. The observed
greedy success rate is 99.8 percent. The average 99th percentile list size
also falls from 43.76 to 32.59, so the gain is not produced only by reducing
small lists on the fringe.

P-SAPST Lite and core aware plans are slightly smaller in aggregate, with mean list size 7.70 and conflict fraction about 0.747. Their advantage reflects the order dependent exposure distribution: a degree or core order can produce a smaller sufficient budget on a particular heterogeneous graph even though it has no pointwise peeling identity. Full \PSAPST\ instead supplies a certified profile for every graph and an exact audit against the lower bound induced by that profile. Reverse peeling is therefore not claimed to minimize \cref{eq:order-budget-objective,eq:order-conflict-objective}. The selector reaches 7.69 colors per vertex and conflict fraction 0.743, but its mean running time is 67.91 ms, compared with 29.74 ms for \PSAPST.

\begin{table}[H]
\centering
\caption{Smallest tested scale whose Wilson lower confidence limit reaches
95 percent. ``Not attained'' means that no tested scale reaches the target.}
\label{tab:success-matched}
\begingroup
\setlength{\tabcolsep}{4pt}
\begin{tabular}{lrrrr}
\toprule
Method & Scale with lower CI at 95\% & Lower CI & List/vertex & Conflict \\
\midrule
Uniform list greedy & not attained & 89.3\% & 10.44 & 0.955 \\
Proof APST & 0.50 & 99.6\% & 41.25 & 1.000 \\
Calibrated APST & 0.75 & 99.1\% & 13.15 & 0.961 \\
P-SAPST Lite & 0.75 & 97.2\% & 6.42 & 0.674 \\
Core aware & 0.75 & 96.6\% & 6.42 & 0.675 \\
P-SAPST & 0.75 & 96.8\% & 6.65 & 0.718 \\
P-SAPST selector & 0.75 & 96.5\% & 6.41 & 0.673 \\
\bottomrule
\end{tabular}

\endgroup
\end{table}

The success controlled comparison in Table~\ref{tab:success-matched} gives
the same conclusion. At \(\alpha=0.75\), the Wilson lower limit is 96.8
percent for \PSAPST\ with 6.65 colors per vertex. Calibrated APST reaches a
99.1 percent lower limit with 13.15 colors per vertex. Uniform list greedy
does not attain the target on the tested scale grid.

\begin{figure}[H]
\centering
\includegraphics[width=0.82\textwidth]{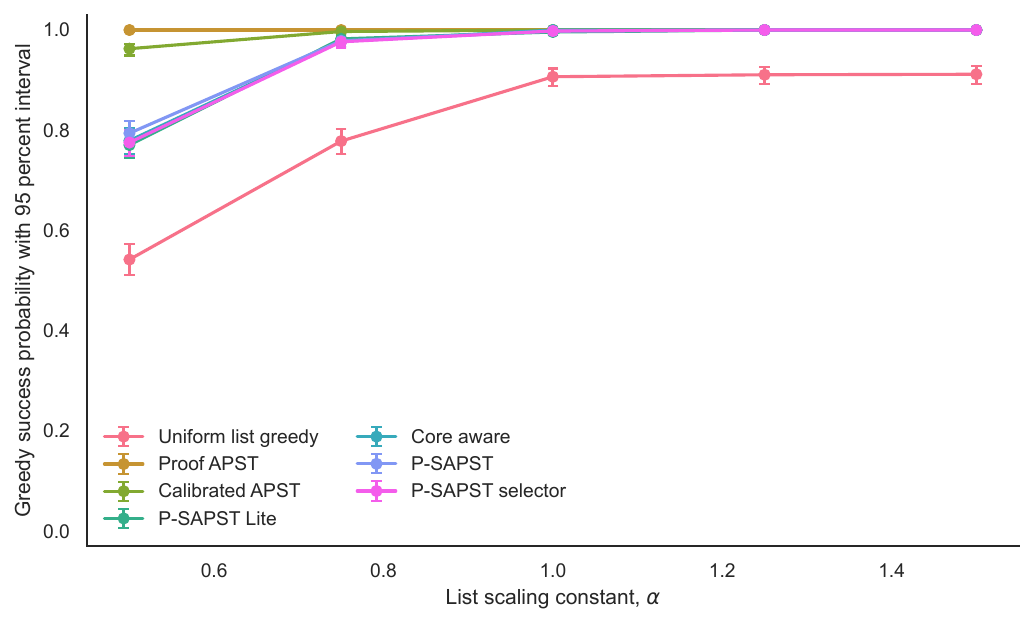}
\caption{Greedy success across list scales with 95 percent Wilson
intervals.}
\label{fig:success-scale}
\end{figure}

Figure~\ref{fig:success-scale} shows the transition from aggressive
compression to stable recovery. Proof APST saturates many palettes.
Calibrated and structure aware methods reach high success with much smaller
lists, while the common uniform budget remains sensitive to heterogeneous
degrees.

\begin{figure}[H]
\centering
\includegraphics[width=0.75\textwidth]{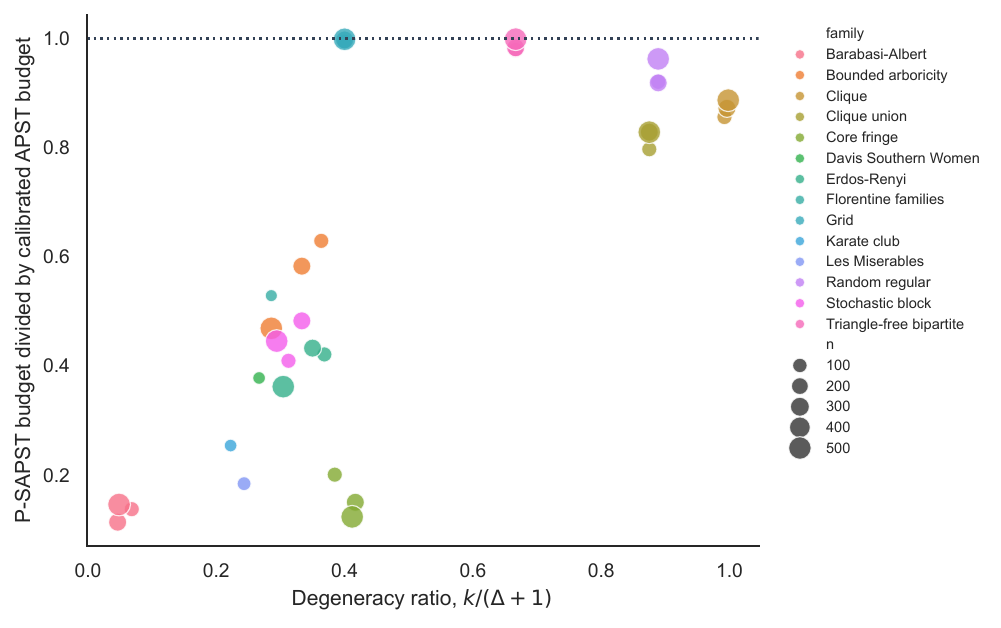}
\caption{\PSAPST\ budget divided by calibrated APST budget against the
degeneracy ratio \(k/(\Delta+1)\). Point area represents graph size.}
\label{fig:gap-budget}
\end{figure}

The structural mechanism is visible in Figure~\ref{fig:gap-budget}. The
largest gains occur when degeneracy is small relative to the palette.
Points approach or exceed a ratio of one when the profile remains dense,
which agrees with \cref{cor:dense-barrier}.

\begin{figure}[H]
\centering
\includegraphics[width=0.95\textwidth]{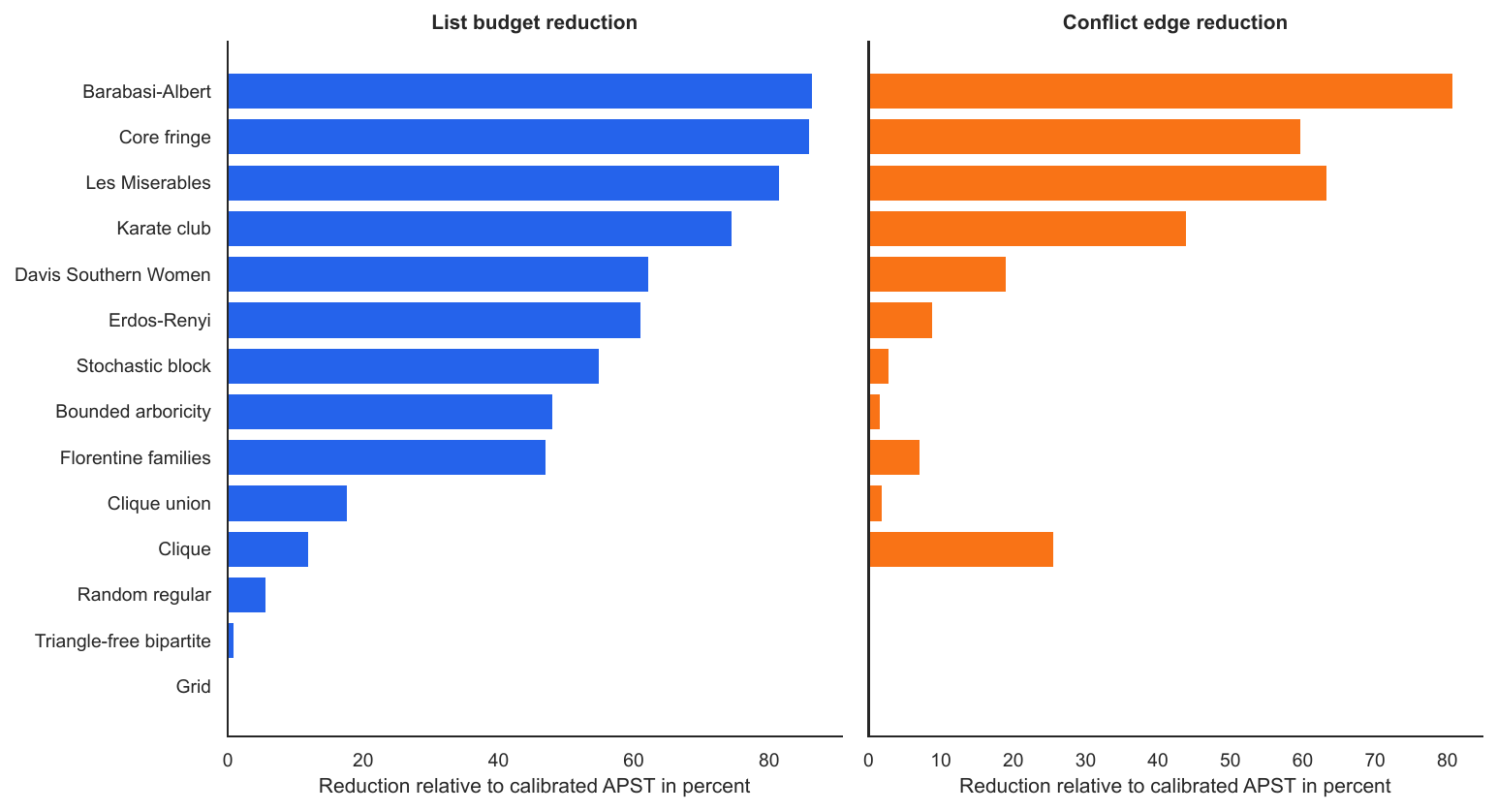}
\caption{Relative list and conflict reduction from calibrated APST by graph family at \(\alpha=1\). Positive values indicate compression by \PSAPST.}
\label{fig:family-reduction}
\end{figure}

Figure~\ref{fig:family-reduction} separates the structural regimes hidden by an aggregate mean. Hub dominated and power law families obtain the largest list reduction, while regular, lattice, and clique families offer little or no advantage. The same ordering broadly appears for conflict edges, although the exact overlap probability also depends on the two endpoint budgets.

\begin{figure}[H]
\centering
\includegraphics[width=0.95\textwidth]{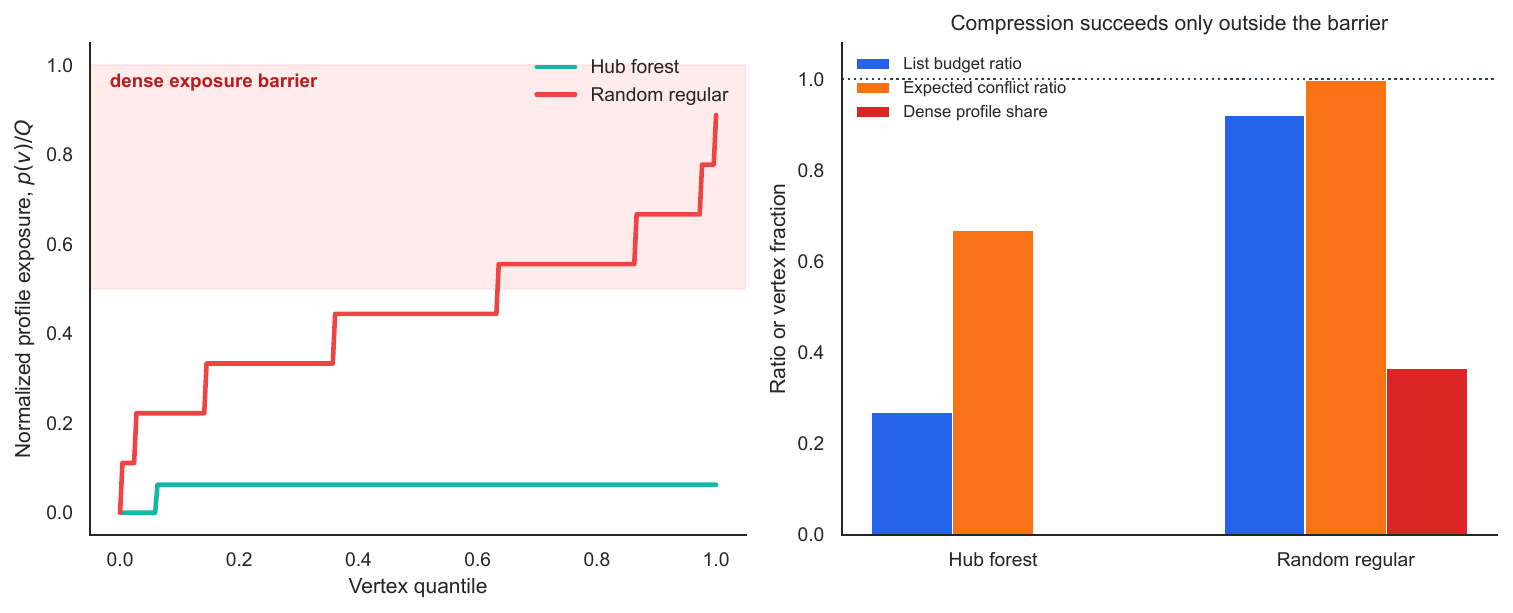}
\caption{A favorable hub forest and a failure case on an eight regular graph with 256 vertices. The normalized profile is sparse on the hub forest, so \PSAPST\ compresses both lists and conflicts. The regular graph has a dense exposure share near one, and its ratios approach the calibrated APST baseline, visually confirming the dense exposure barrier.}
\label{fig:barrier-failure}
\end{figure}

Figure~\ref{fig:barrier-failure} makes the dense exposure barrier explicit. The hub forest has budget ratio 0.270 and no vertex above half palette exposure. The random regular graph has budget ratio 0.918, expected conflict ratio 0.998, and 37.5 percent of vertices in the displayed dense region. The regular case therefore records a controlled failure of compression rather than an omitted unfavorable instance.

\begin{figure}[H]
\centering
\includegraphics[width=0.88\textwidth]{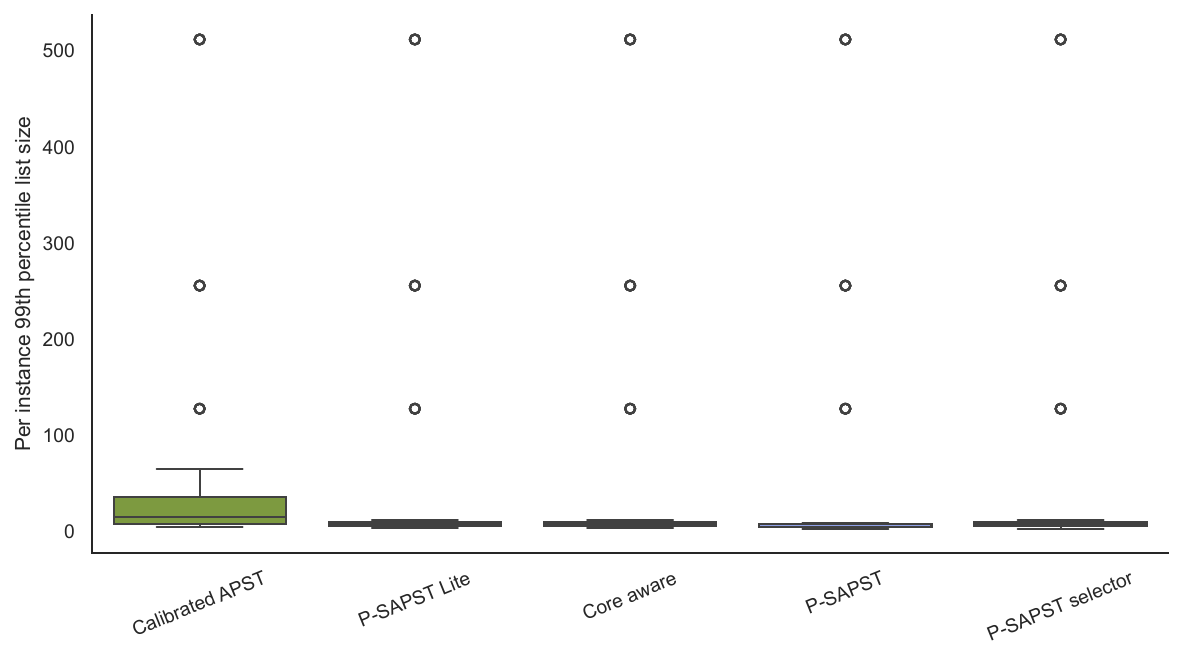}
\caption{Distribution of the per plan 99th percentile list size at
\(\alpha=1\).}
\label{fig:list-tail}
\end{figure}

Figure~\ref{fig:list-tail} reports the list tail requested by the local storage analysis. \PSAPST\ improves substantially over reciprocal rank allocation, but P-SAPST Lite and core orders remain competitive. The spread also records the dense families where all local rules approach the palette cap.

\subsection{Exact profile lower bound audit}

\begin{figure}[H]
\centering
\includegraphics[width=0.9\textwidth]{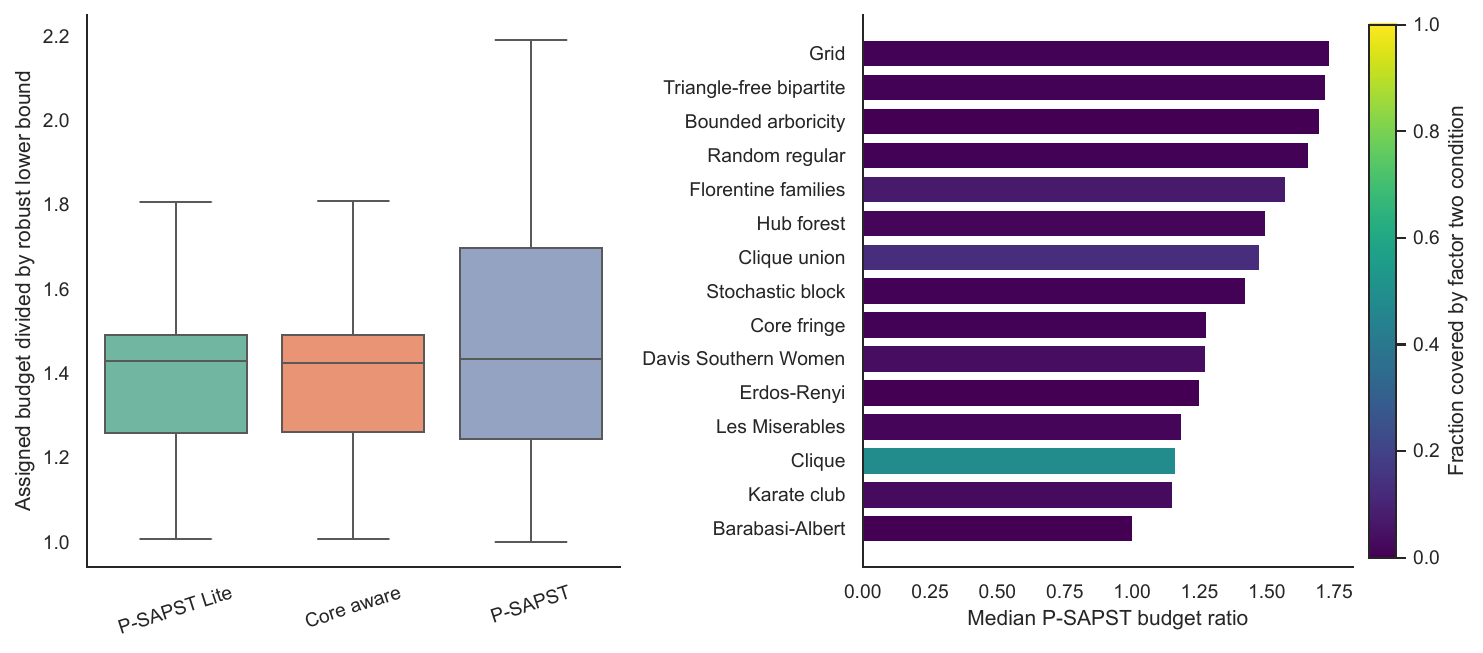}
\caption{Allocated budget divided by the exact profile lower bound \(H(\sigma)\) from \cref{thm:exact-local-optimum}. The ratio is computed before random list sampling.}
\label{fig:profile-audit}
\end{figure}

\begin{table}[H]
\centering
\caption{Budget ratio to the exact history robust profile optimum at \(\alpha=1\). The certified column records the share of vertices satisfying the simpler analytic factor two condition.}
\label{tab:profile-audit}
\begingroup
\setlength{\tabcolsep}{4pt}
\begin{tabular}{llrrrr}
\toprule
Suite & Method & Median ratio & P90 ratio & Certified & Instances \\
\midrule
Real networks & Core aware & 1.21 & 1.57 & 9.8\% & 4 \\
Real networks & P-SAPST & 1.23 & 1.57 & 3.5\% & 4 \\
Real networks & P-SAPST Lite & 1.21 & 1.57 & 9.6\% & 4 \\
SAPBench Core & Core aware & 1.42 & 1.57 & 12.3\% & 30 \\
SAPBench Core & P-SAPST & 1.43 & 1.73 & 6.5\% & 30 \\
SAPBench Core & P-SAPST Lite & 1.43 & 1.57 & 12.2\% & 30 \\
SAPBench Scale & Core aware & 1.48 & 1.77 & 8.1\% & 24 \\
SAPBench Scale & P-SAPST & 1.67 & 1.80 & 0.2\% & 24 \\
SAPBench Scale & P-SAPST Lite & 1.48 & 1.77 & 8.1\% & 24 \\
\bottomrule
\end{tabular}

\endgroup
\end{table}

The audit in Figure~\ref{fig:profile-audit} and Table~\ref{tab:profile-audit} compares each structural order with the strongest lower bound justified by the same exposure profile. For full \PSAPST, the median ratio is 1.43 on SAPBench Core and 1.67 on SAPBench Scale, with respective 90th percentiles 1.73 and 1.80. The Barab\'asi and Albert family has median ratio 1.00. These values establish near optimal calibration only within the history robust independent uniform list model. They do not constitute a lower bound for correlated sampling, non greedy recovery, or a different order.

\subsection{Conflict prediction}

\begin{figure}[H]
\centering
\includegraphics[width=\textwidth]{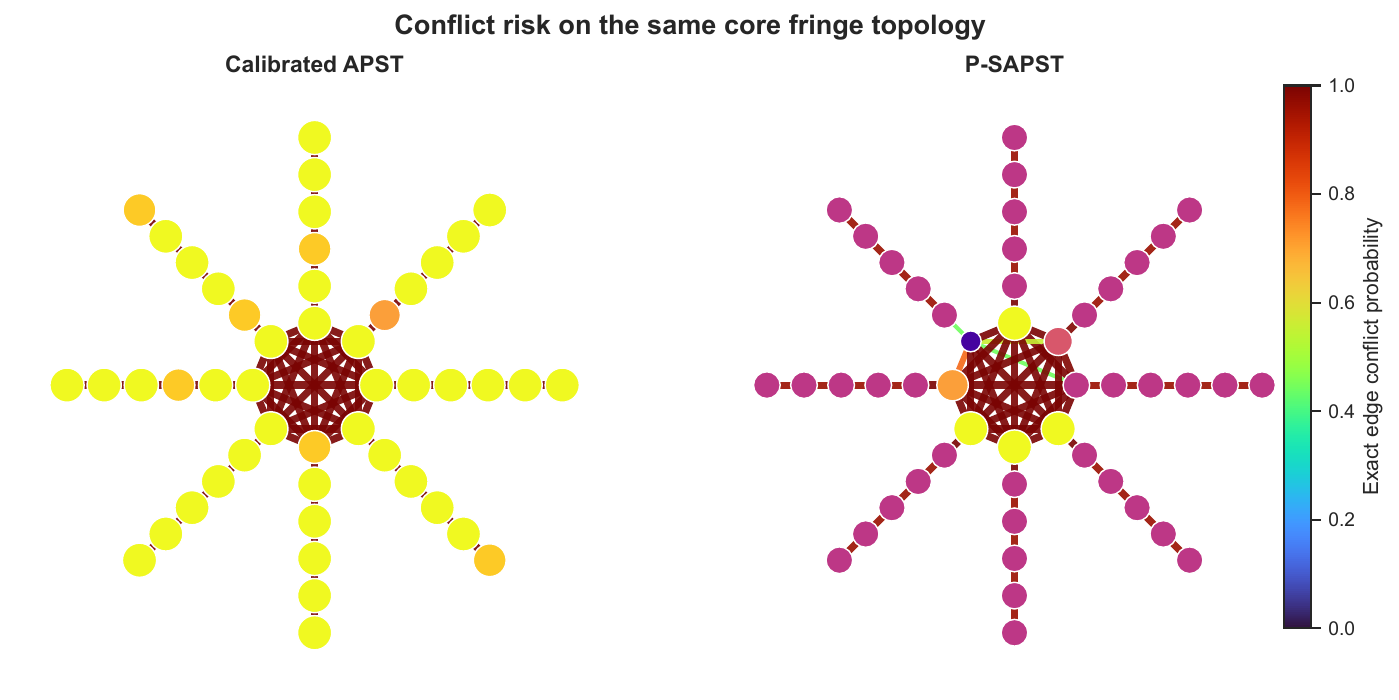}
\caption{Topology level conflict mechanism. Vertex color represents normalized local exposure, and edge color represents the exact probability that the endpoint lists intersect. The left panel uses reciprocal rank allocation, while the right panel uses the peeling profile.}
\label{fig:topology-conflict}
\end{figure}

Figure~\ref{fig:topology-conflict} locates the conflict potential instead of reporting only a graph level total. Reciprocal rank allocation distributes large budgets according to position and leaves substantial overlap on exposed edges. Profile calibration contracts lists on the sparse fringe and preserves larger lists only near the dense center. This spatial view explains why a single mean list size cannot identify where the compressed instance remains difficult.

\begin{figure}[H]
\centering
\includegraphics[width=0.95\textwidth]{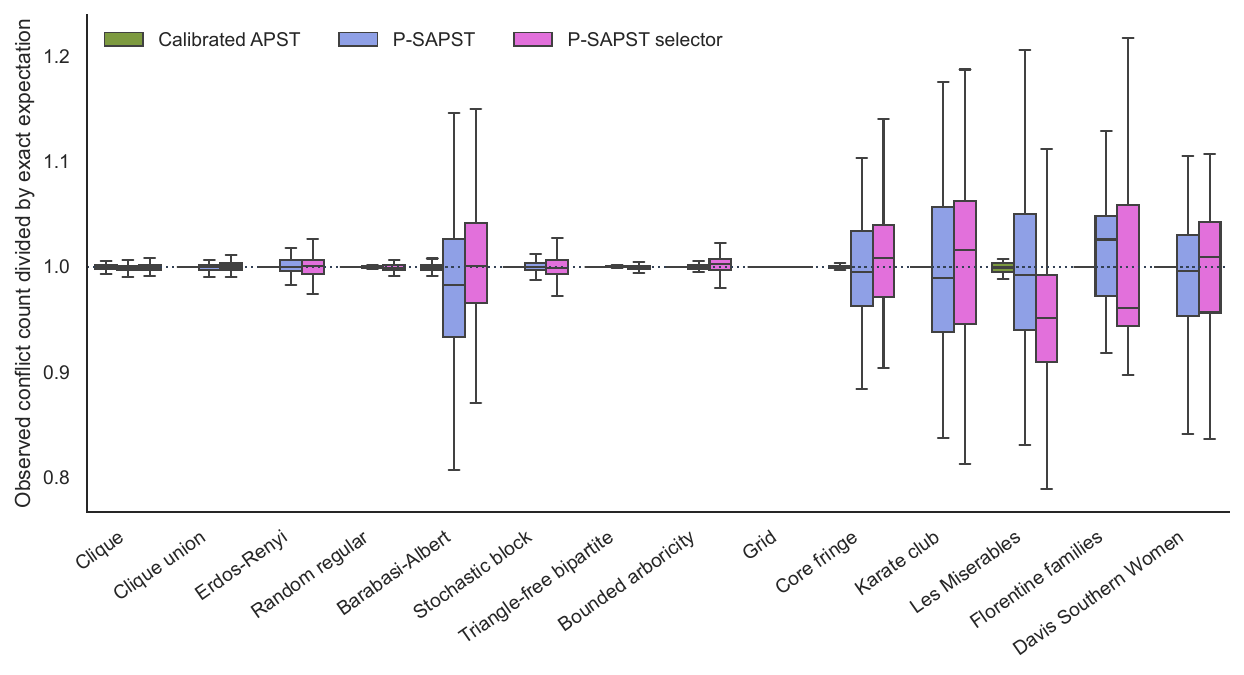}
\caption{Observed conflict count divided by the exact expectation. The
dotted line marks equality.}
\label{fig:conflict-residual}
\end{figure}

The median residual in Figure~\ref{fig:conflict-residual} is one for
calibrated APST, \PSAPST, and the selector. For \PSAPST, the fifth and 95th
percentiles are 0.934 and 1.060. This agreement validates the
hypergeometric expectation without reusing it as an observed performance
measure. Dependence among adjacent edge indicators creates variation, and
\cref{eq:conflict-concentration} controls that variation through the input
degrees.

\subsection{Scaling and phase costs}

\begin{figure}[H]
\centering
\includegraphics[width=\textwidth]{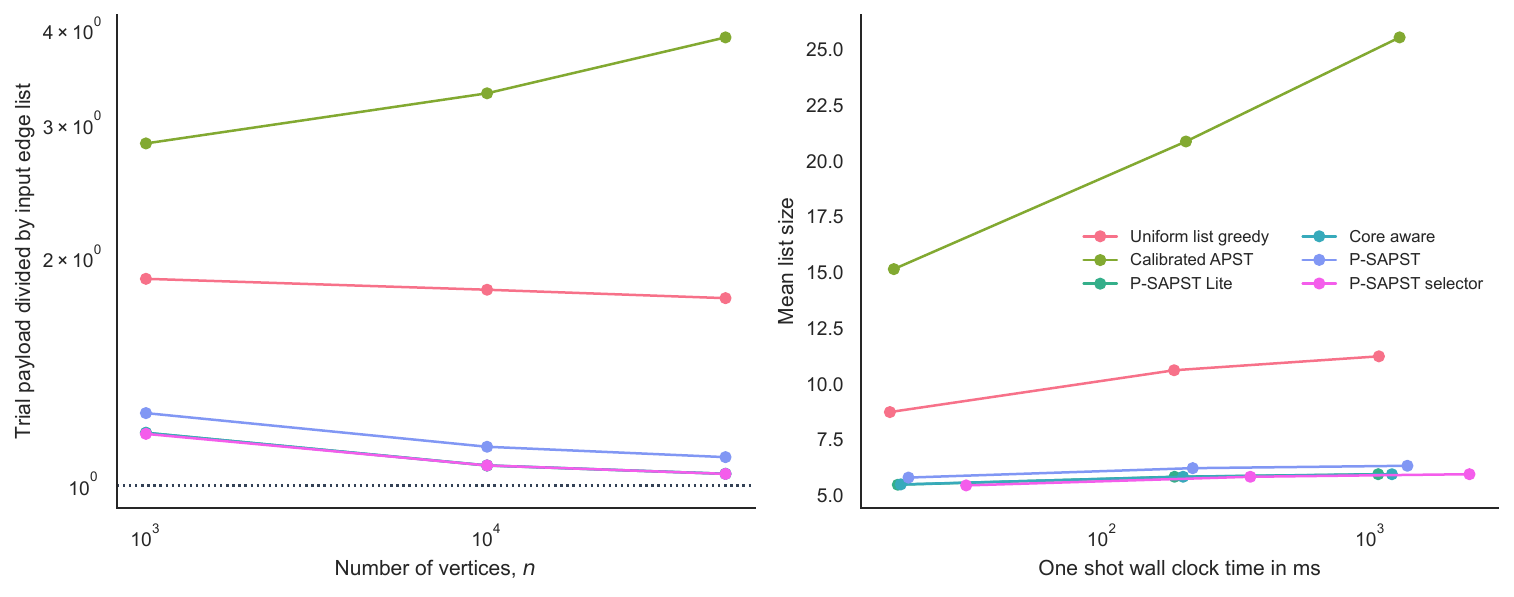}
\caption{Payload scaling and the running time versus list size tradeoff on
SAPBench Scale.}
\label{fig:scale-payload}
\end{figure}

\begin{table}[H]
\centering
\caption{SAPBench Scale at \(\alpha=1\). Each row aggregates eight graph
families and thirty independent list trials per family.}
\label{tab:scale-results}
\begingroup
\setlength{\tabcolsep}{4pt}
\begin{tabular}{lrrrrr}
\toprule
Method and size & Mean & P99 & Success & Payload & ms \\
\midrule
Core aware $n=1,000$ & 5.51 & 7.62 & 100.0\% & 1.174 & 17.71 \\
Core aware $n=10,000$ & 5.85 & 7.62 & 100.0\% & 1.063 & 199.81 \\
Core aware $n=50,000$ & 5.96 & 8.12 & 100.0\% & 1.036 & 1199.85 \\
P-SAPST Lite $n=1,000$ & 5.49 & 7.50 & 100.0\% & 1.174 & 17.26 \\
P-SAPST Lite $n=10,000$ & 5.85 & 7.88 & 100.0\% & 1.062 & 185.66 \\
P-SAPST Lite $n=50,000$ & 5.96 & 8.12 & 100.0\% & 1.036 & 1066.64 \\
Calibrated APST $n=1,000$ & 15.16 & 24.62 & 100.0\% & 2.833 & 16.69 \\
Calibrated APST $n=10,000$ & 20.88 & 61.75 & 100.0\% & 3.300 & 204.76 \\
Calibrated APST $n=50,000$ & 25.55 & 138.50 & 100.0\% & 3.912 & 1281.44 \\
P-SAPST $n=1,000$ & 5.81 & 6.50 & 99.2\% & 1.246 & 18.92 \\
P-SAPST $n=10,000$ & 6.23 & 6.88 & 100.0\% & 1.125 & 217.05 \\
P-SAPST $n=50,000$ & 6.34 & 6.88 & 100.0\% & 1.090 & 1370.18 \\
P-SAPST selector $n=1,000$ & 5.46 & 7.50 & 100.0\% & 1.170 & 31.04 \\
P-SAPST selector $n=10,000$ & 5.85 & 7.88 & 100.0\% & 1.063 & 355.91 \\
P-SAPST selector $n=50,000$ & 5.96 & 8.12 & 100.0\% & 1.036 & 2334.18 \\
Uniform list greedy $n=1,000$ & 8.75 & 8.75 & 99.6\% & 1.876 & 16.13 \\
Uniform list greedy $n=10,000$ & 10.62 & 10.62 & 99.6\% & 1.814 & 185.03 \\
Uniform list greedy $n=50,000$ & 11.25 & 11.25 & 100.0\% & 1.769 & 1072.95 \\
\bottomrule
\end{tabular}

\endgroup
\end{table}

At 50,000 vertices, \PSAPST\ uses 6.34 colors per vertex and has an average
99th percentile of 6.88. Calibrated APST uses 25.55 colors per vertex and
has a 99th percentile of 138.50. Their payload ratios are 1.090 and 3.912.
The corresponding mean times are 1.37 s and 1.28 s, so the compressed
instance is substantially smaller without an end to end speedup at this
size.

The advantage is concentrated in the intended structural regime. On the
50,000 vertex hub forest, \PSAPST\ uses 2.99 colors per vertex and retains
3.9 percent of the input edges, while calibrated APST uses 52.83 colors and
retains 94.6 percent. On the power law graph of the same size, the means are
3.00 and 70.29. In contrast, grids, regular graphs, and regular bipartite
graphs retain nearly every edge because their small palettes and sustained
exposure force large local lists.

\begin{figure}[H]
\centering
\includegraphics[width=\textwidth]{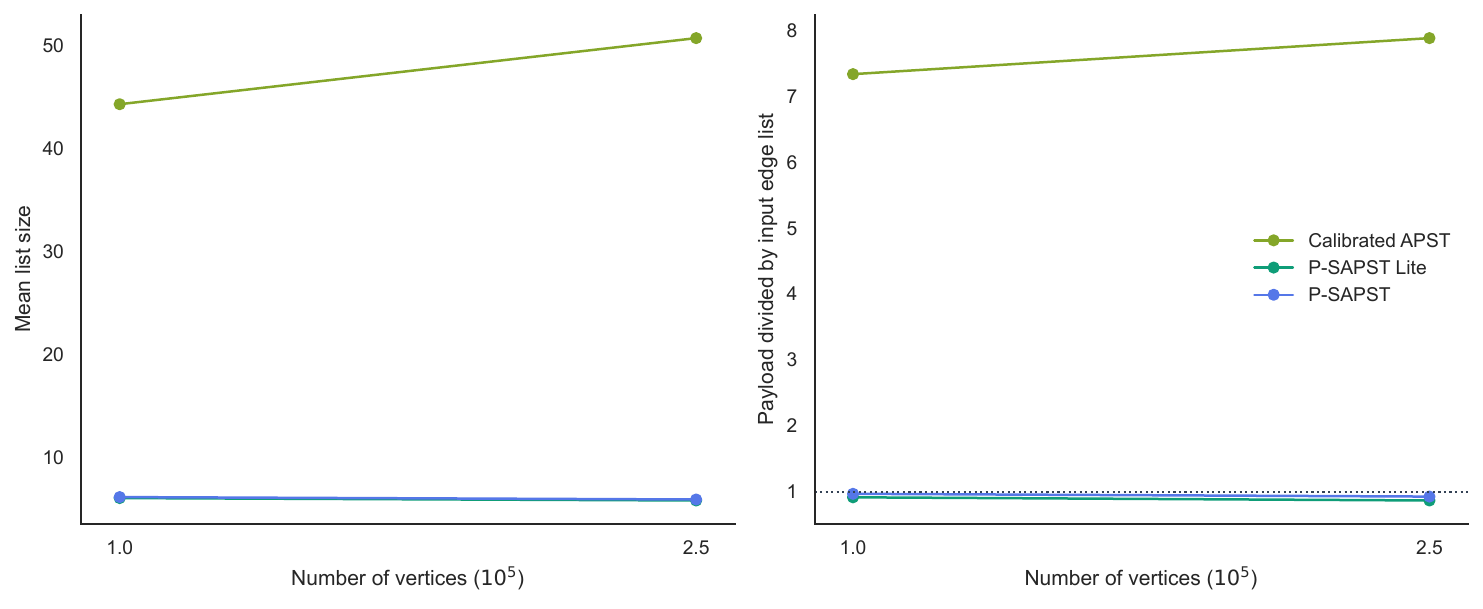}
\caption{SAPBench Stress at 100,000 and 250,000 vertices. The panels report list volume, payload ratio, and end to end running time.}
\label{fig:large-scale-stress}
\end{figure}

\begin{table}[H]
\centering
\caption{SAPBench Stress. Edge counts and performance values are averaged over four graph families. Each method uses ten independent list samples per instance, and every run succeeds.}
\label{tab:large-scale-stress}
\begingroup
\setlength{\tabcolsep}{3pt}
\begin{tabular}{lrrrrrr}
\toprule
Method and size & Edges & Mean & P99 & Conflict & Payload & ms \\
\midrule
P-SAPST Lite $n=100,000$ & 299,878 & 6.09 & 8.75 & 0.456 & 0.914 & 2522 \\
P-SAPST Lite $n=250,000$ & 750,337 & 5.85 & 8.50 & 0.441 & 0.865 & 6434 \\
Calibrated APST $n=100,000$ & 299,878 & 44.35 & 372.50 & 0.893 & 7.339 & 3617 \\
Calibrated APST $n=250,000$ & 750,337 & 50.77 & 522.75 & 0.860 & 7.886 & 10147 \\
P-SAPST $n=100,000$ & 299,878 & 6.18 & 7.25 & 0.505 & 0.967 & 3477 \\
P-SAPST $n=250,000$ & 750,337 & 5.95 & 6.75 & 0.497 & 0.925 & 9294 \\
\bottomrule
\end{tabular}

\endgroup
\end{table}

The stress results in Figure~\ref{fig:large-scale-stress} and Table~\ref{tab:large-scale-stress} extend the implementation to 250,000 vertices and more than 1.25 million edges. At the larger size, P-SAPST Lite uses 5.85 colors per vertex, retains 44.1 percent of edges, and has payload ratio 0.865. Calibrated APST uses 50.77 colors, retains 86.0 percent of edges, and has payload ratio 7.886. Full \PSAPST\ uses 5.95 colors with payload ratio 0.925. All 240 stress runs complete greedy recovery. The aggregate again hides sharp differences. On the 250,000 vertex power law graph, Lite uses 2.99 colors and payload ratio 0.310, whereas on the bounded arboricity construction it uses 8.51 colors and payload ratio 1.271.

\subsection{SNAP network results}

\begin{table}[H]
\centering
\caption{Results on two SNAP networks at \(\alpha=1\). The column \(\Delta/k\) reports maximum degree and degeneracy. Every method uses ten independent list samples, and all runs succeed.}
\label{tab:real-networks}
\begingroup
\setlength{\tabcolsep}{2.6pt}
\begin{tabular}{llrrrrrrr}
\toprule
Network & Method & $n$ & $m$ & $\Delta/k$ & Mean & Conflict & Payload & ms \\
\midrule
ca-HepTh & P-SAPST Lite & 9,877 & 25,973 & 65/31 & 3.90 & 0.278 & 0.649 & 342 \\
ca-HepTh & Calibrated APST & 9,877 & 25,973 & 65/31 & 33.19 & 0.998 & 4.154 & 282 \\
ca-HepTh & P-SAPST & 9,877 & 25,973 & 65/31 & 3.94 & 0.307 & 0.682 & 369 \\
email-Enron & P-SAPST Lite & 36,692 & 183,831 & 1383/43 & 2.69 & 0.009 & 0.193 & 1050 \\
email-Enron & Calibrated APST & 36,692 & 183,831 & 1383/43 & 76.51 & 0.564 & 5.814 & 3577 \\
email-Enron & P-SAPST & 36,692 & 183,831 & 1383/43 & 2.69 & 0.009 & 0.194 & 1035 \\
\bottomrule
\end{tabular}

\endgroup
\end{table}

Table~\ref{tab:real-networks} confirms that the synthetic stress trend is present in observed heterogeneous networks. On ca HepTh, \PSAPST\ uses 3.94 colors per vertex and payload ratio 0.682, while calibrated APST uses 33.19 colors and payload ratio 4.154. On email Enron, \PSAPST\ uses 2.69 colors and payload ratio 0.194, compared with 76.51 colors and 5.814 for calibrated APST. P-SAPST Lite is marginally smaller on both networks. The difference between maximum degree and degeneracy is especially large on email Enron, which is the structural regime predicted by the profile analysis.

\begin{figure}[H]
\centering
\includegraphics[width=0.85\textwidth]{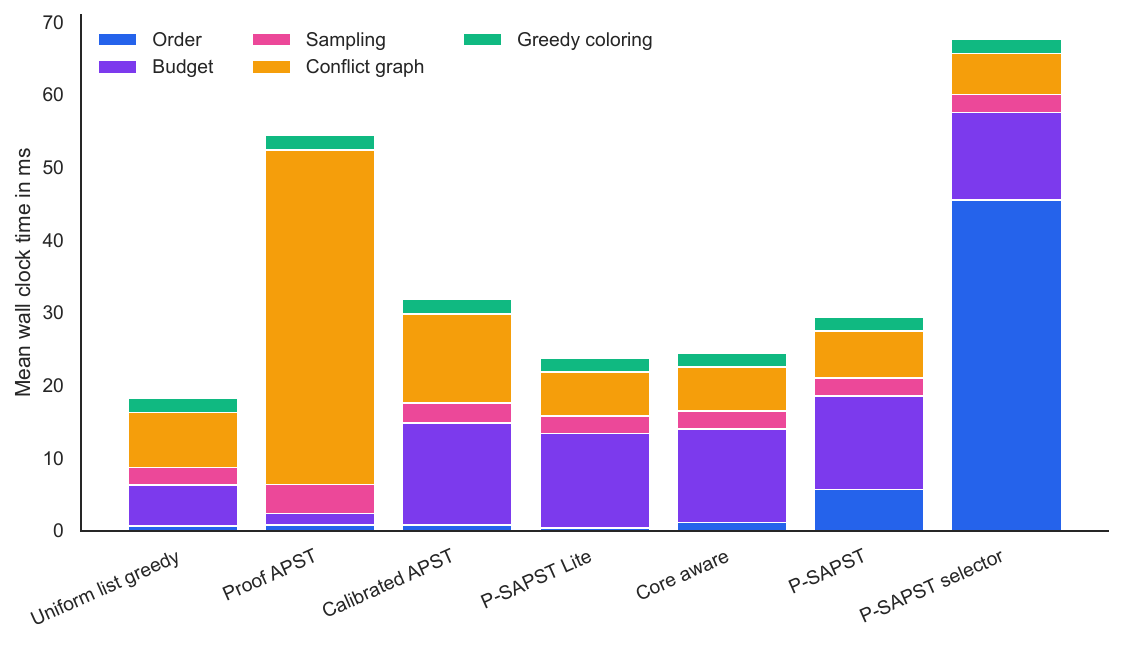}
\caption{Mean phase level wall clock time on SAPBench Core at
\(\alpha=1\).}
\label{fig:runtime-breakdown}
\end{figure}

\begin{table}[H]
\centering
\caption{Mean phase times in milliseconds on SAPBench Core at
\(\alpha=1\).}
\label{tab:runtime-breakdown}
\begingroup
\setlength{\tabcolsep}{4pt}
\begin{tabular}{lrrrrr}
\toprule
Method & Order & Budget & Sampling & Conflict & Coloring \\
\midrule
Uniform list greedy & 0.70 & 5.65 & 2.41 & 7.54 & 1.94 \\
Proof APST & 0.80 & 1.59 & 3.99 & 46.02 & 2.00 \\
Calibrated APST & 0.81 & 14.05 & 2.75 & 12.26 & 1.98 \\
P-SAPST Lite & 0.40 & 13.02 & 2.44 & 5.98 & 1.93 \\
Core aware & 1.16 & 12.89 & 2.48 & 6.03 & 1.89 \\
P-SAPST & 5.72 & 12.89 & 2.42 & 6.50 & 1.91 \\
P-SAPST selector & 45.55 & 12.03 & 2.46 & 5.65 & 1.93 \\
\bottomrule
\end{tabular}

\endgroup
\end{table}

Figure~\ref{fig:runtime-breakdown} and
Table~\ref{tab:runtime-breakdown} isolate the planning cost. \PSAPST\ spends
5.72 ms on order construction, compared with 0.81 ms for calibrated APST.
Its smaller conflict graph reduces the later conflict phase enough that the
mean end to end time is slightly lower on the core suite. The selector
spends 45.55 ms on order comparison and is slower overall.

When one structural plan is reused for \(r\) independent list samples, the average time per compressed instance is
\begin{equation}
A_r
=
\frac{T_{\mathrm{plan}}}{r}
+
T_{\mathrm{budget}}
+
T_{\mathrm{sample}}
+
T_{\mathrm{conflict}}
+
T_{\mathrm{color}}.
\label{eq:amortized-time}
\end{equation}
where \(T_{\mathrm{plan}}\) is paid once and every other term is paid per sample.

\begin{figure}[H]
\centering
\includegraphics[width=0.88\textwidth]{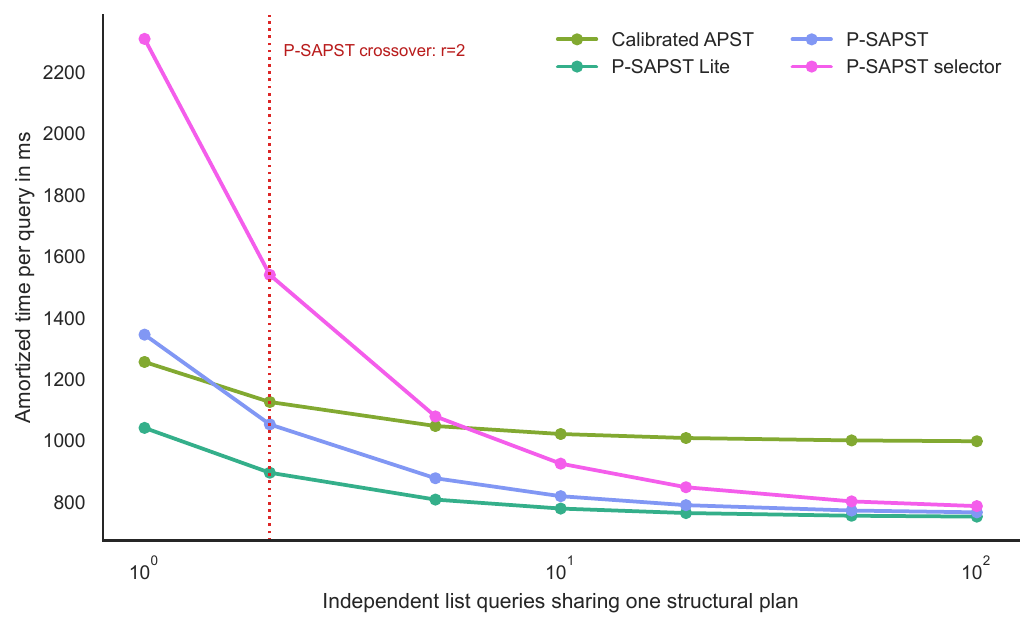}
\caption{Mean time per compressed instance when a graph dependent plan is reused across independent list samples.}
\label{fig:amortized-reuse}
\end{figure}

Figure~\ref{fig:amortized-reuse} instantiates \cref{prop:repeated-query-crossover} using the measured phase decomposition at 50,000 vertices. The suite averaged crossover for full \PSAPST\ occurs at the second independent query. At \(r=10\), the amortized times are 821 ms for \PSAPST, 780 ms for P-SAPST Lite, and 1,023 ms for calibrated APST. The selector crosses after six queries and remains relevant only when its modest compression gain justifies repeated reuse.

\begin{figure}[H]
\centering
\includegraphics[width=0.72\textwidth]{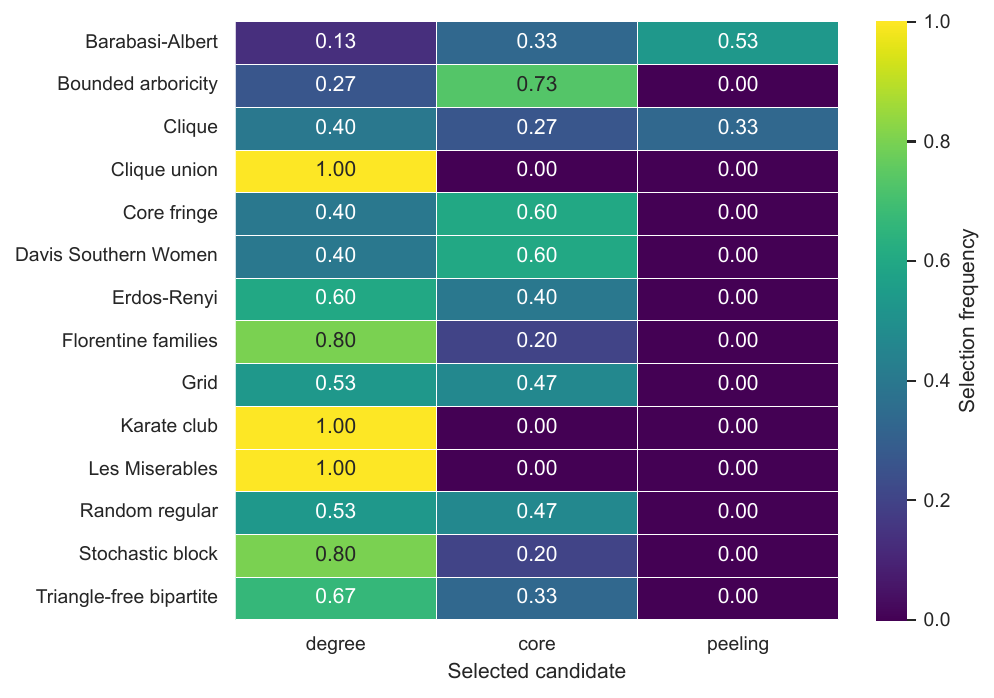}
\caption{Order selected by the optional three candidate extension at \(\alpha=1\). The selector optimizes exact expected conflict size and therefore often favors degree or core order. Peeling remains the certified candidate because it alone supplies the pointwise profile identity.}
\label{fig:selector-map}
\end{figure}

The selector map confirms that peeling is chosen infrequently under the exact conflict objective. Degree is selected most often, followed by core order. This preference concerns the empirical objective in \cref{eq:selector-objective}; it does not provide a worst case certificate for the chosen order. Peeling remains the theorem bearing instance because \cref{thm:profile-identity} identifies every exposure before sampling, while the selector is an engineering extension for repeated settings where the planning cost can be amortized.

\subsection{Limitations}

The experiments do not establish a universal ordering advantage. Dense exposure can force lists close to the full palette, and the compact payload can still exceed the input edge list. P-SAPST Lite has lower planning cost but lacks an unconditional relation to the peeling profile on adversarial graphs. The method assumes that adjacency is available before list sampling, so it does not directly apply to an edge arrival stream. A semi streaming extension would require a compact sketch that estimates a useful removal or exposure profile without storing the full graph. The fixed profile lower bound applies only to independent uniform lists with history robust greedy recovery. Correlated sampling, nonuniform colors, and recovery with revisions may escape it because the conditional product argument no longer applies. The real network study contains two SNAP graphs and is not a comprehensive survey of social, web, road, or biological collections. SAPBench Stress establishes larger implementation scale, while broader external evaluation remains future work.

\FloatBarrier

\section{Conclusion}
\label{sec:conclusion}

\PSAPST\ is the certified degeneracy guided instance of a broader exposure calibrated ordering framework. Reverse minimum degree peeling turns residual degrees into exact backward exposures, which then determine vertexwise list budgets before any color is sampled. Within the independent uniform history robust model, the resulting profile has an exact local benchmark and yields strict separations on high degree forests and core fringe graphs. Exact conflict bounds and the dense exposure barrier identify both the compressible region and its boundary. P-SAPST Lite uses the same calibration with a cheaper degree order and has a conditional transfer guarantee when its exposure profile is dominated by peeling. The experiments find payload ratios below one on hub dominated synthetic graphs and both SNAP networks, while regular and dense families provide explicit failure cases.

The primary open problem is to approximate a useful exposure profile in the semi streaming model. A successful sketch would need to preserve enough of the removal structure to retain local budget gains without materializing adjacency. Further questions include lower bounds for correlated lists or non greedy recovery, dynamic maintenance of approximate profiles, extensions to variable list and defective coloring, and faster order rules with a provable relation to the best exposure calibrated order.

\appendix
\section{Proofs}
\label{sec:appendix-proofs}

\begin{proof}[Proof of \cref{lem:local-calibration}]
Condition on the graph, the auxiliary randomness, and the resulting plan.
When greedy recovery reaches \(v_i\), let \(F_i\) be the colors assigned to
its earlier neighbors. The recovery history determines \(F_i\), and
\begin{equation}
\lvert F_i\rvert
\leq
b_\sigma(v_i).
\label{eq:forbidden-size}
\end{equation}
where repeated neighbor colors can only make the forbidden set smaller.

By \cref{eq:history-independence}, the current list remains a uniform subset
independent of \(F_i\). If \(\ell(v_i)>b_\sigma(v_i)\), failure is
impossible. Otherwise,
\begin{align}
\Prb\!\left(
L(v_i)\subseteq F_i
\mid\mathcal{H}_i
\right)
&\leq
\frac{\binom{b_\sigma(v_i)}{\ell(v_i)}}
{\binom{\Q}{\ell(v_i)}}\notag\\
&=
\prod_{j=0}^{\ell(v_i)-1}
\frac{b_\sigma(v_i)-j}{\Q-j}\notag\\
&\leq
\left(
\frac{b_\sigma(v_i)}{\Q}
\right)^{\ell(v_i)}
\leq
\eta_{v_i}.
\label{eq:conditional-failure}
\end{align}
where the ratio is the exact probability that sampling without replacement
selects only forbidden colors. The last inequality follows from
\cref{eq:local-calibration}. If the uncapped expression exceeds \(\Q\),
the assigned list is the full palette and failure is zero.

A union bound over the conditional failure events gives
\begin{equation}
\Prb(\text{greedy recovery fails})
\leq
\sum_{v\in V}\eta_v
\leq
\delta.
\label{eq:global-failure}
\end{equation}
where the first inequality does not require independence among the failure
events.
\end{proof}

\begin{proof}[Proof of \cref{thm:profile-identity}]
Fix \(r_i\). At the moment it is removed, its active neighbors are exactly
\begin{equation}
N(r_i)\cap\{r_{i+1},\ldots,r_n\}.
\label{eq:active-neighbors}
\end{equation}
where these vertices occur before \(r_i\) after the removal sequence is
reversed. Vertices \(r_1,\ldots,r_{i-1}\) occur after \(r_i\) and therefore
do not contribute to its backward neighborhood. Hence
\(B_{\sigma_{\mathrm{peel}}}(r_i)\) equals the set in
\cref{eq:active-neighbors}, whose cardinality is \(p(r_i)\).
\end{proof}

\begin{proof}[Proof of \cref{thm:forest-separation}]
Every nonempty forest has a vertex of degree at most one. Minimum degree
peeling therefore satisfies
\begin{equation}
p(v)\leq 1
\qquad
\text{for every }v\in V.
\label{eq:forest-profile}
\end{equation}
where isolated vertices have profile value zero. For a vertex with
\(p(v)=1\),
\begin{align}
\lambda_{\delta/n}(1)
&=
\min\!\left\{
\Q,
\left\lceil
\frac{\log(n/\delta)}{\log\Q}
\right\rceil
\right\}\notag\\
&\leq
\left\lceil
\frac{(1+c)\log n}{\gamma\log n}
\right\rceil
=
\left\lceil
\frac{1+c}{\gamma}
\right\rceil.
\label{eq:forest-local-budget}
\end{align}
where \(\delta=n^{-c}\) and \(\Q\geq n^\gamma\). Summing proves
\cref{eq:forest-budget}.

A forest has fewer than \(n\) edges. Applying
\cref{eq:conflict-upper} with the bound in
\cref{eq:forest-local-budget} gives
\begin{equation}
\E\lvert E_L\rvert
\leq
\frac{n}{\Delta+1}\,
\left\lceil\frac{1+c}{\gamma}\right\rceil^2.
\label{eq:forest-conflict-proof}
\end{equation}
where the right side is \(O(n/\Delta)\).

For the reciprocal rank rule, let
\begin{equation}
t=
\left\lceil
\frac{c_0n\log n}{\Q}
\right\rceil.
\label{eq:rank-threshold}
\end{equation}
where \(t<n\) for sufficiently large \(n\) because
\(\Q\geq n^\gamma\). The saturated prefix contributes at most
\(O(n\log n)\), and the remaining harmonic tail is at most
\(O(n\log^2 n)\). For the converse asymptotic estimate,
\begin{align}
\sum_{j=t}^{n}\ell_R(j)
&\geq
c_0n\log n
\sum_{j=t}^{n}\frac{1}{j}\notag\\
&=
\Omega\!\left(
n\log n\,
\log\frac{n}{t}
\right)
=
\Omega(n\log^2 n).
\label{eq:rank-tail}
\end{align}
where
\(\log(n/t)=\Omega(\log(\Q/\log n))=\Omega(\log n)\).
This proves \cref{eq:rank-forest-budget}.
\end{proof}

\begin{proof}[Proof of \cref{thm:core-fringe-separation}]
Every pendant vertex is removed before the clique core and has removal degree one. After the fringe is removed, the residual graph is \(K_s\), so each core removal degree is at most \(s-1\). The palette size is
\begin{equation}
\Q=s+t.
\label{eq:core-fringe-palette}
\end{equation}
where the maximum degree is \(s+t-1\). For a fringe exposure \(b=1\), the assumptions give \(\Q/b\geq t\geq n^\gamma\). For a core exposure \(1\leq b\leq s-1\),
\begin{equation}
\frac{\Q}{b}
\geq
\frac{t}{s}
\geq
n^\gamma.
\label{eq:core-fringe-gap}
\end{equation}
where the final inequality is the fringe dominance condition. Therefore every positive profile value receives at most \(C_{\gamma,c}\) colors, while a zero profile value receives one color. Summing proves \cref{eq:core-fringe-budget}.

The graph has
\begin{equation}
m=st+\binom{s}{2}\leq s(s+t)=s\Q.
\label{eq:core-fringe-edges}
\end{equation}
where the first term counts fringe edges and the second counts core edges. Applying \cref{eq:conflict-upper} with endpoint budgets at most \(C_{\gamma,c}\) proves \cref{eq:core-fringe-conflict}.

Finally, \(t\geq s n^\gamma\) and \(n=s(t+1)\) imply that \(\Q\) grows polynomially with \(n\), while \(\Q\leq n\). The reciprocal rank harmonic tail argument from the proof of \cref{thm:forest-separation} therefore applies unchanged and proves \cref{eq:core-fringe-rank-budget}.
\end{proof}

\begin{proof}[Proof of \cref{prop:lite-transfer}]
Fix a vertex with \(p(v)>0\). The exposure dominance condition gives
\begin{align}
\log\frac{\Q}{b_{\mathrm{L}}(v)}
&\geq
\log\frac{\Q}{\Q^{1-\theta}p(v)^\theta}\notag\\
&=
\theta\log\frac{\Q}{p(v)}.
\label{eq:lite-transfer-log}
\end{align}
where the right side is a fraction \(\theta\) of the peeling exposure potential. If the peeling budget is not capped, dividing \(\log(1/\eta)\) by \cref{eq:lite-transfer-log} and rounding proves \cref{eq:lite-budget-transfer}. If the peeling budget is capped at \(\Q\), the right side of \cref{eq:lite-budget-transfer} is at least \(\Q\), while every list is capped at \(\Q\). Vertices with \(p(v)=0\) have zero Lite exposure by assumption and receive one color under both orders.
\end{proof}

\begin{proof}[Proof of \cref{thm:exact-local-optimum}]
Fix a forbidden set \(F\subseteq[\Q]\) with \(\lvert F\rvert=b\). For a
uniform list of size \(\ell\), the failure probability is
\begin{equation}
\Prb(L\subseteq F)
=
\frac{\binom{b}{\ell}}{\binom{\Q}{\ell}}.
\label{eq:exact-optimum-proof}
\end{equation}
where the ratio is zero when \(\ell>b\). The value depends only on \(b\),
not on the identity of \(F\). Therefore a list is valid for every
independent forbidden set of size \(b\) exactly when the ratio is at most
\(\eta\). The smallest such integer is \(h_\eta(b)\), which proves
\cref{eq:exact-optimum-condition}. Applying the same necessary condition to
every vertex and summing proves \cref{eq:profile-optimum}.
\end{proof}

\begin{proof}[Proof of \cref{prop:repeated-query-crossover}]
Subtracting the two repeated query costs gives
\begin{equation}
T_M(r)-T_R(r)
=
P_M-P_R-r(q_R-q_M).
\label{eq:repeated-query-difference}
\end{equation}
where \(q_R-q_M>0\) by assumption. The right side is negative under the condition in \cref{eq:repeated-query-threshold}, which proves the claim.
\end{proof}

\begin{proof}[Proof of \cref{thm:local-lower}]
For a forbidden set of size exactly \(b\), the failure probability is
\begin{equation}
\frac{\binom{b}{\ell}}{\binom{\Q}{\ell}}
=
\prod_{j=0}^{\ell-1}
\frac{b-j}{\Q-j}.
\label{eq:exact-local-failure}
\end{equation}
where the expression is valid for \(\ell\leq b\). If
\(\ell\leq b/2\), then \(b-j\geq b/2\) and \(\Q-j\leq\Q\) throughout the
product. Consequently,
\begin{equation}
\frac{\binom{b}{\ell}}{\binom{\Q}{\ell}}
\geq
\left(\frac{b}{2\Q}\right)^\ell.
\label{eq:local-failure-lower}
\end{equation}
where the right side is a lower bound on the worst history failure
probability. Requiring this quantity to be at most \(\eta\) proves
\cref{eq:local-lower-bound}.

If \(b\leq\Q/2\), then
\begin{equation}
\log(2\Q/b)
=
\log(\Q/b)+\log 2
\leq
2\log(\Q/b).
\label{eq:factor-two-denominator}
\end{equation}
where the final inequality uses \(\log(\Q/b)\geq\log 2\). Thus the lower
bound is at least half of the unrounded sufficient value in
\cref{eq:local-calibration}.
\end{proof}

\begin{proof}[Proof of \cref{cor:dense-barrier}]
Fix \(v\in H_\beta\), and write \(b=b_\sigma(v)\). If
\(\ell(v)>b/2\), then
\begin{equation}
\ell(v)
\geq
\left\lfloor\frac{\beta\Q}{2}\right\rfloor+1.
\label{eq:dense-large-case}
\end{equation}
where \(b\geq\beta\Q\). Otherwise, \cref{thm:local-lower} gives
\begin{equation}
\ell(v)
\geq
\frac{\log(1/\eta)}{\log(2\Q/b)}
\geq
\frac{\log(1/\eta)}{\log(2/\beta)}.
\label{eq:dense-small-case}
\end{equation}
where the second inequality again uses \(b\geq\beta\Q\). Therefore every
vertex in \(H_\beta\) has list size at least \(s_\beta\), which proves
\cref{eq:dense-storage-lower}. The function
\(\rho_{\Q}(x,y)\) is nondecreasing in both arguments, so each edge induced
by \(H_\beta\) has intersection probability at least
\(\rho_{\Q}(s_\beta,s_\beta)\). Linearity of expectation proves
\cref{eq:dense-conflict-lower}.
\end{proof}

\begin{proof}[Proof of \cref{thm:conflict-theory}]
Fix an edge \(\{u,v\}\), and abbreviate its list sizes by \(x\) and \(y\).
After \(L(u)\) is fixed, the number of \(y\)-subsets disjoint from it is
\(\binom{\Q-x}{y}\) when \(x+y\leq\Q\). Therefore
\begin{equation}
\Prb(L(u)\cap L(v)\neq\varnothing)
=
\rho_{\Q}(x,y).
\label{eq:edge-intersection-proof}
\end{equation}
where the probability is one if the two list sizes sum to more than the
palette size. Summing edge indicators proves
\cref{eq:exact-conflict-expectation}. Independence among edge indicators is
not needed.

The lists form independent vertex variables. Replacing \(L(v)\) can alter
only the conflict indicators on edges incident to \(v\), so the bounded
difference constant is \(d(v)\). McDiarmid's inequality yields
\cref{eq:conflict-concentration}.

Finally, a union bound over the \(xy\) cross pairs of list positions gives
\begin{equation}
\rho_{\Q}(x,y)
\leq
\min\!\left\{1,\frac{xy}{\Q}\right\}.
\label{eq:intersection-upper-proof}
\end{equation}
where each position pair matches with probability \(1/\Q\). Summing over
the input edges proves \cref{eq:conflict-upper}.
\end{proof}

\begin{proof}[Proof of \cref{prop:complexity}]
The heap stores one current degree key per active vertex. Removing a vertex
and updating its active neighbors performs \(O(n+m)\) heap operations, each
with logarithmic cost. A scan of the reversed order computes all profile
budgets. Sampling touches each stored color once. Greedy recovery scans the
lists and adjacency structure, while direct conflict construction tests
each input edge through its smaller endpoint list. These operations give
the stated time bounds and require the adjacency representation together
with the sampled lists.
\end{proof}

\begingroup
\setlength{\bibsep}{0pt plus 0.2ex}
\bibliographystyle{plainnat}
\bibliography{references}

@inproceedings{assadi2019sublinear,
  title={Sublinear algorithms for ($\Delta$+ 1) vertex coloring},
  author={Assadi, Sepehr and Chen, Yu and Khanna, Sanjeev},
  booktitle={Proceedings of the Thirtieth Annual ACM-SIAM Symposium on Discrete Algorithms},
  pages={767--786},
  year={2019},
  organization={SIAM}
}

@article{assadi2026simple,
  title={Simple Sublinear Algorithms for {$(\Delta+1)$} Vertex Coloring via Asymmetric Palette Sparsification},
  author={Assadi, Sepehr and Yazdanyar, Helia},
  journal={TheoretiCS},
  volume={5},
  year={2026},
  publisher={Episciences. org}
}

@article{kahn2023asymptotics,
  title={Asymptotics for palette sparsification},
  author={Kahn, Jeff and Kenney, Charles},
  journal={arXiv preprint arXiv:2306.00171},
  year={2023}
}

@article{kahn2024asymptotics,
  title={Asymptotics for palette sparsification from variable lists},
  author={Kahn, Jeff and Kenney, Charles},
  journal={arXiv preprint arXiv:2407.07928},
  year={2024}
}

@article{ashvinkumar2024palette,
  title={Palette Sparsification via FKNP},
  author={Ashvinkumar, Vikrant and Kenney, Charles},
  journal={arXiv preprint arXiv:2408.12835},
  year={2024}
}

@article{hefetz2025coloring,
  title={Coloring Graphs From Random Lists},
  author={Hefetz, Dan and Krivelevich, Michael},
  journal={Random Structures \& Algorithms},
  volume={66},
  number={1},
  pages={e21278},
  year={2025},
  publisher={Wiley Online Library}
}

@article{alon2020palette,
  title={Palette sparsification beyond {$(\Delta+1)$} vertex coloring},
  author={Alon, Noga and Assadi, Sepehr},
  journal={arXiv preprint arXiv:2006.10456},
  year={2020}
}

@article{dhawan2024palette,
  title={Palette Sparsification for Graphs with Sparse Neighborhoods},
  author={Dhawan, Abhishek},
  journal={arXiv preprint arXiv:2408.08256},
  year={2024}
}

@inproceedings{flin2024distributed,
  title={A distributed palette sparsification theorem},
  author={Flin, Maxime and Ghaffari, Mohsen and Halld{\'o}rsson, Magn{\'u}s M and Kuhn, Fabian and Nolin, Alexandre},
  booktitle={Proceedings of the 2024 Annual ACM-SIAM Symposium on Discrete Algorithms (SODA)},
  pages={4083--4123},
  year={2024},
  organization={SIAM}
}

@inproceedings{chang2019complexity,
  title={The complexity of ($\Delta$+ 1) coloring in congested clique, massively parallel computation, and centralized local computation},
  author={Chang, Yi-Jun and Fischer, Manuela and Ghaffari, Mohsen and Uitto, Jara and Zheng, Yufan},
  booktitle={Proceedings of the 2019 ACM symposium on principles of distributed computing},
  pages={471--480},
  year={2019}
}

@inproceedings{halldorsson2022near,
  title={Near-optimal distributed degree+ 1 coloring},
  author={Halld{\'o}rsson, Magn{\'u}s M and Kuhn, Fabian and Nolin, Alexandre and Tonoyan, Tigran},
  booktitle={Proceedings of the 54th Annual ACM SIGACT Symposium on Theory of Computing},
  pages={450--463},
  year={2022}
}

@article{maus2023distributed,
  title={Distributed graph coloring made easy},
  author={Maus, Yannic},
  journal={ACM Transactions on Parallel Computing},
  volume={10},
  number={4},
  pages={1--21},
  year={2023},
  publisher={ACM New York, NY}
}

@article{feigenbaum2005graph,
  title={On graph problems in a semi-streaming model},
  author={Feigenbaum, Joan and Kannan, Sampath and McGregor, Andrew and Suri, Siddharth and Zhang, Jian},
  journal={Theoretical Computer Science},
  volume={348},
  number={2-3},
  pages={207--216},
  year={2005},
  publisher={Elsevier}
}

@article{bera2019graph,
  title={Graph coloring via degeneracy in streaming and other space-conscious models},
  author={Bera, Suman K and Chakrabarti, Amit and Ghosh, Prantar},
  journal={arXiv preprint arXiv:1905.00566},
  year={2019}
}

@inproceedings{assadi2022deterministic,
  title={Deterministic graph coloring in the streaming model},
  author={Assadi, Sepehr and Chen, Andrew and Sun, Glenn},
  booktitle={Proceedings of the 54th Annual ACM SIGACT Symposium on Theory of Computing},
  pages={261--274},
  year={2022}
}

@inproceedings{assadi2023coloring,
  title={Coloring in graph streams via deterministic and adversarially robust algorithms},
  author={Assadi, Sepehr and Chakrabarti, Amit and Ghosh, Prantar and Stoeckl, Manuel},
  booktitle={Proceedings of the 42nd ACM SIGMOD-SIGACT-SIGAI Symposium on Principles of Database Systems},
  pages={141--153},
  year={2023}
}

@article{matula1983smallest,
  title={Smallest-last ordering and clustering and graph coloring algorithms},
  author={Matula, David W and Beck, Leland L},
  journal={Journal of the ACM (JACM)},
  volume={30},
  number={3},
  pages={417--427},
  year={1983},
  publisher={ACM New York, NY, USA}
}

@article{erdds1959random,
  title={On random graphs I},
  author={ERDdS, PARW and R\&wi, A},
  journal={Publ. math. debrecen},
  volume={6},
  number={290-297},
  pages={18},
  year={1959}
}

@article{barabasi1999emergence,
  title={Emergence of scaling in random networks},
  author={Barab{\'a}si, Albert-L{\'a}szl{\'o} and Albert, R{\'e}ka},
  journal={science},
  volume={286},
  number={5439},
  pages={509--512},
  year={1999},
  publisher={American Association for the Advancement of Science}
}

@article{holland1983stochastic,
  title={Stochastic blockmodels: First steps},
  author={Holland, Paul W and Laskey, Kathryn Blackmond and Leinhardt, Samuel},
  journal={Social networks},
  volume={5},
  number={2},
  pages={109--137},
  year={1983},
  publisher={Elsevier}
}

@techreport{hagberg2007exploring,
  title={Exploring network structure, dynamics, and function using NetworkX},
  author={Hagberg, Aric and Swart, Pieter J and Schult, Daniel A},
  year={2007},
  institution={Los Alamos National Laboratory (LANL)}
}

@article{leskovec2016snap,
  title={Snap: A general-purpose network analysis and graph-mining library},
  author={Leskovec, Jure and Sosi{\v{c}}, Rok},
  journal={ACM Transactions on Intelligent Systems and Technology (TIST)},
  volume={8},
  number={1},
  pages={1--20},
  year={2016},
  publisher={ACM New York, NY, USA}
}
\endgroup

\end{document}